\newcommand{\suchthat}{\;\ifnum\currentgrouptype=16 \middle\fi|\;}
\newtheorem{prop}{Proposition}
\newtheorem{lemma}{Lemma}
\newtheorem{cor}{Corollary}
\newtheorem{fact}{Fact}
\newtheorem{defn}{Definition}
\newcommand{\continuation}{??}
\theoremstyle{remark}
\newtheorem{remark}{Remark}
\DeclareMathOperator*{\argmax}{\arg\!\max}
\newcommand{\E}{\mathbb{E}}
\newcommand{\eps}{\varepsilon}
\definecolor{darkblue}{rgb}{0.0,0.0,0.35}
\begin{document}

\title{\large{Exogenous Consideration and Extended Random Utility} \thanks{I thank Nail Kashaev, Nirav Mehta, Rory McGee, Salvador Navarro, and seminar participants at the University of Western Ontario for helpful comments. This research was supported by the SSHRC Insight Development Grant Program.}}

\author{ \small{Roy Allen}  \\
    \small{Department of Economics} \\
    \small{University of Western Ontario} \\
    \small{rallen46@uwo.ca}
}
\date{\small{ \today }} 

\maketitle

\begin{abstract}
In a consideration set model, an individual maximizes utility among the considered alternatives. I relate a consideration set additive random utility model to classic discrete choice and the extended additive random utility model, in which utility can be $-\infty$ for infeasible alternatives. When observable utility shifters are bounded, all three models are observationally equivalent. Moreover, they have the same counterfactual bounds and welfare formulas for changes in utility shifters like price. For attention interventions, welfare cannot change in the full consideration model but is completely unbounded in the limited consideration model. The identified set for consideration set probabilities has a minimal width for any bounded support of shifters, but with unbounded support it is a point: identification ``towards'' infinity does not resemble identification ``at'' infinity.
\end{abstract}

\newpage

\section{Introduction}
To many economists, ``discrete choice'' is synonymous with logit or the general nonparametric additive random utility model (ARUM) studied in \cite{mcfadden1981econometric}. ARUM is widely used, but has many flaws. One criticism is that the representation assumes individuals consider all alternatives, which has motivated analysis of models that do not assume full consideration. Questions that arise include: When is it possible to falsify the hypothesis of full consideration? When can we identify the probability an alternative is considered? How does limited consideration alter counterfactual or welfare analysis?

This paper studies these questions for two models that ``update'' the classic additive random utility model to allow limited consideration. The first is the extended additive random utility model, in which unobservable utility shocks can equal $-\infty$ for some alternatives. Such alternatives are unavailable, not considered, or arbitrarily undesirable. The second is the additive random utility model with consideration sets. In this model, an individual first forms a consideration set, and then maximizes utility among alternatives in this set. Consideration set formation can arbitrarily depend on unobservable taste shocks, but does not depend on utility indices or covariates and so I term it exogenous.

Overall, this paper finds a divide between results that are robust across all ARUM family models, and results that sharply differ.  Classical analysis concerning empirical content (over bounded sets), welfare, and counterfactuals is robust to consideration sets. The robust results establish that the procedural concern that individuals do not consider all alternatives is not \textit{automatically} an empirical concern, at least for the models studied here. For an attention intervention, which is outside of classical analysis, an extreme contrast emerges: the analyst cannot meaningfully bound welfare increases for an attention intervention for the consideration set model. While the nonparametric consideration set model is more general and allows new questions to be tackled, it is not always capable of answering those questions. This is relevant because a major motivation of the recent literature on consideration sets is to measure welfare changes given attention interventions. More structure is necessary or richer observables, such as menu variation \citep{manzini2014stochastic}, covariates that only shift attention \citep{goeree2008limited}, or the extra assumption of independence between consideration and preference heterogeneity \citep{abaluck2021consumers,barseghyan2021discrete}. I now elaborate on the results.

For the empirically important case of bounded regressors, all ARUM-type models are observationally equivalent. Thus, any behavior that can be described by an ARUM consideration set model can be described by a full consideration ARUM model. This spills over to counterfactual analysis involving a change in utility indices (e.g. a price change): all models reach the same conclusions concerning bounds on choice probabilities. To an outside observer, it is ``as if'' each individual considers all alternatives. This does not mean ARUM is a ``good'' consideration set model. Instead, it formally establishes that the empirical limitations of the hypothesis of full consideration cannot be separated from general empirical failures of ARUM. When ARUM is falsified, it is not immediate whether this is due to endogenous consideration, classic endogeneity concerns, or other concerns entirely such as failures of individual optimization.

I next study identification of consideration set probabilities. I find a severe discontinuity. The identified set of a consideration probability has a minimal width for \textit{any} bounded set of utility index variation, yet with unbounded variation it is a point. Thus, large bounded support does not resemble full support. The intuition for this is we cannot distinguish $\eps_k$ being ``very negative'' from $\eps_k = -\infty$, interpreted as not considered. This discontinuity also manifests in counterfactual analysis with an attention intervention that makes an alternative always be considered.

Welfare analysis is subtle in consideration set models. \cite{masatlioglu2012revealed} highlight that the principle of revealed preference breaks. When an alternative is chosen, it is not obvious which alternatives it is preferred to. This makes direct utility comparisons challenging, but I show indirect utility comparisons are straightforward. Specifically, for a change in utility indices such as prices, the change in average indirect utility (or social surplus \citep{mcfadden1981econometric}) is identified. The reason for this is indirect utility calculations operate over a maximizing envelope and are unaffected by points below the envelope. Here, for all models, choice probabilities are the derivative of average indirect utility and we can integrate to identify changes in welfare. Thus, classic ``consumer surplus'' formulas are robust to exogenous limited consideration. Such formulas -- or local versions as in ``sufficient statistics'' \cite{chetty2009sufficient} -- are widely used empirically.

Next I analyze an attention intervention that makes a specific alternative always be considered. In the classic model there is no scope for a change. In contrast, for the limited consideration model, welfare bounds are the trivial bounds $[0,\infty)$. The model cannot rule out no welfare change, allows an arbitrarily high welfare change, and the bounds do not depend on choice probabilities. The reason for this is simple: individuals who do not consider an alternative may have an arbitrarily high latent utility shock for it. This makes analysis similar to the question of assessing the welfare impact of a new good, which requires enough structure. The benchmark model of exogenous consideration is useful because this result formally establishes that a model that generalizes this setup in one way -- such as by relaxing the exogenous consideration assumption -- must specialize in another to guarantee a finite welfare bound.

The primary analysis assumes utility indices for each alternative are known in order to focus on the role of unobservables. I show observational equivalence extends to the case of unknown utility indices, provided they are continuous and covariates vary over a bounded set. I show that in fact, the identified set for utility indices is the same for each model under these assumptions. This means that identification results for utility indices for one ARUM-type model can be ported to the others.\footnote{Results can always go from the more general consideration set models to classic ARUM, e.g. \cite{allen2019identification} covers a generalization of the extended ARUM. Porting an identification result from classic ARUM to the other models requires allowing choice probabilities to be separated from 1 to allow limited consideration. This rules out certain classic nonparametric identification results as in Theorem 4 in \cite{matzkin1993nonparametric}.}

The specific exogenous consideration set model studied here appears to be new, but I mention some setups related to exogenous consideration. The consideration set models of \cite{manzini2014stochastic}, \cite{aguiar2017random}, and \cite{jagabathula2021demand} are formally random utility models in the sense of \cite{block1960random}, when using menu variation. Consideration set models that can be written as random utility models can be interpreted as exogenous consideration set models: alternatives in the consideration set have a random utility that is well separated above the other alternatives, and the random utilities are independent of the menu and thus exogenous. In a setting of decisions under uncertainty, \cite{barseghyan2021discrete} assumes covariate shifters are independent of consideration to identify preferences and consideration sets. Likewise, \cite{abaluck2021consumers} has unobservables independent of prices that shift desirability. Though these models may be general random utility models due to the independence assumption, none of these models are additive random utility models.

This paper is part of a large literature studying choice sets that are latent. A classic paper is  \cite{manski1977structure}. A choice set could be latent for many reasons, including a good going out of stock \citep{conlon2013demand}. The term ``consideration set'' is not standardized in economics, but often refers to a latent mental choice set that is not modelled as an endogenous solution to an optimization problem (e.g. \cite{masatlioglu2012revealed}).\footnote{Outside of economics, more emphasis is placed on the formation of the consideration set itself, e.g. \cite{campbell1969existence}, \cite{narayana1975consumer}, \cite{hauser1990evaluation}, and \cite{roberts1991development}. See \cite{honka2019empirical} and \cite{crawford2021survey} for further references spanning economics, transportation, and marketing.} This distinguishes the terminology from explicit multi-stage choice in which a first stage is optimally chosen,\footnote{Exceptions include \cite{caplin2019rational} and parts of \cite{aguiar2023random}.} as in classic nested logit and certain models of search \citep{weitzman1978optimal}. The theoretical literature has studied empirical content and identification focusing on variation in the set of available alternatives, termed menu variation (\cite{manzini2014stochastic}, \cite{brady2016menu}, \cite{aguiar2017random}, \cite{kashaev2022random}). Econometric and empirical analysis has focused on covariate variation (\cite{chiang1998markov}, \cite{barseghyan2021discrete}, \cite{abaluck2021consumers}, \cite{lu2022estimating}, \cite{crawford2021survey}), including the use of exclusion restrictions that alter consideration but not preferences \citep{bacsar2004parameterized,goeree2008limited}. See Chapter 11 in \cite{strzalecki2023} for additional references and explicit links between menu and covariate variation.\footnote{See also \cite{fosgerau2013choice}, \cite{kawaguchi2021designing}, and \cite{allen2022revealed} in discrete choice, and \cite{he2023identification} and \cite{agarwal2022demand} in matching.} In contrast with most of this literature, the goal here is not to separately identify consideration and preferences. This paper instead shows that classic ARUM can be interpreted as a consideration set model. In this sense the paper is closer to \cite{matvejka2015rational} and \cite{fosgerau2020discrete}, who interpret classic discrete choice through the lens of rational inattention.

The rest of this paper is organized as follows. Section~\ref{sec:models} defines the models. Section~\ref{sec:obs} presents observational equivalence results, and then studies identification of consideration sets and the related question of distiguishing between the models. Section~\ref{sec:counter} studies counterfactual analysis with a change in covariates or attention. Section~\ref{sec:welfare} studies welfare analysis. Section~\ref{sec:unknownutility} generalizes the baseline setup to cover unknown utility indices. Section~\ref{sec:disc} presents concluding remarks.

\section{The Models} \label{sec:models}
This paper formalizes models in terms of structural choice probabilities $p : U \rightarrow \Delta^K$. The number of alternatives is $K \geq 2$, $\Delta^K$ is the probability simplex consisting of $K$-dimensional vectors that sum to $1$ and are non-negative. The argument of $p$ is a $K$-dimensional vector $u \in U$, interpreted as a vector of utility indices. Thus, $p(u)$ is the vector of choice probabilities given utility indices, $p_k(u)$ is the probability of choosing alternative $k$, and $u_k$ is the utility index for alternative $k$. I treat the utility indices as known in the core analysis. I study observational equivalence and identification questions using variation in utility indices.\footnote{An alternative interpretation of our core analysis is that $u_k$ is negative price, in which we study price variation. In Section~\ref{sec:unknownutility} we study unknown utility indices so that $u_k$ is replaced by $v_k(x_k)$ for a vector $x_k$ of observable covariates and unknown function $v_k$. This covers variation in covariates when the utility index is not known.} The set $U \subseteq \mathbb{R}^K$ is the set of utility indices for which $p(u)$ is identified. 

I first define the extended additive random utility model and classic additive random utility model. In these models, the individual chooses the alternative $k$ that maximizes $u_k + \eps_k$. Here, $u = (u_1, \ldots, u_K)$ is observable to each agent and the econometrician, and $\eps = (\eps_1, \ldots, \eps_K)$ is unobservable to the econometrician. I formalize consistency of choice probabilities with models below.
\begin{defn} \label{defarume}
Choice probabilities $p$ are consistent with the extended additive random utility model (ARUM-E) if there is a distribution $\mu$ over $\eps$ such that:
\begin{enumerate}[(i)]
    \item $\eps_k < \infty$ $\mu$-a.s. for each $k$.
    \item $p_k(u) = \Pr_{\mu}(u_k + \eps_k > \max_{j \neq k} u_j + \eps_j)$ for every $u \in U$ and $k$.
\end{enumerate}
If in addition $\mu$ can be chosen such that $\eps_k > -\infty$ $\mu$-a.s. for each $k$, then $p$ is consistent with the classic additive random utility model (ARUM).
\end{defn}
Any such distribution $\mu$ over $\eps$ is said to \textit{rationalize} the corresponding model. This paper rules out endogeneity between covariates and unobservables because it treats utility indices as parameters that change while the unobservables have the same distributions. Part (ii) assumes a unique maximizer with probability $1$, for any value of utility indices $u \in U$. It rules out utility ties to avoid discussing tie-breaking rules for each model.

Classic ARUM is the model studied in \cite{mcfadden1981econometric} among many others, and in econometrics is often just referred to as the random utility model. ARUM-E is a generalization of ARUM. The definition of ARUM-E as a stand-alone model appears to be new, but it is a special case of the perturbed utility model studied in \cite{allen2019identification}. When $\eps_k = -\infty$, alternative $k$ is either unavailable, not considered, or arbitrarily bad.\footnote{Page 4 in \cite{fosgerau2013choice} mentions allowing $u_k = -\infty$ to accommodate menu variation, i.e. $k$ is unavailable. This paper makes $u$ be finite and thus does not study menu variation.} Thus in ARUM-E, each individual chooses among alternatives with finite $\eps_k$ so as to maximize utility. Note that since a maximizer exists with probability $1$, some $\eps_k$ is finite with probability $1$.

I now formalize the additive random utility model with consideration sets. Let $\mathcal{S}$ denote the set of nonempty subsets of $\{1, \ldots, K\}$. I introduce $\eta \in H$ as an unobservable that alters consideration. The attention function $S : H \rightarrow \mathcal{S}$ describes which sets are considered given $\eta$. Each individual chooses to maximize $u_k + \eps_k$ among alternatives $k$ in the consideration set $S(\eta)$.
\begin{defn} \label{defcs}
Choice probabilities $p$ are consistent with the consideration set additive random utility model (ARUM-CS) if there is a distribution $\nu$ over $(\eps,\eta)$ such that:
\begin{enumerate}[(i)]
    \item $-\infty < \eps_k < \infty$ $\nu$-a.s. for each $k$.
    \item $p_k(u) = \Pr_{\nu}(\{ u_k + \eps_k > \max_{j \in S(\eta) : j \neq k} u_j + \eps_j \} \cap \{ k \in S(\eta)\})$ for every $u \in U$ and $k$.
\end{enumerate}
\end{defn}
Part (i) restricts each $\eps_k$ to be finite. Part (ii) models that $k$ is chosen when it maximizes utility among alternatives in the consideration set. Note that this definition allows variables determining consideration sets ($\eta$) to be arbitrarily correlated with the other shocks ($\eps$). 

To close the setup, we collect the definitions of distributions that rationalize $p$ according to the respective models.

\begin{defn} \label{defrat}
Let $\mathcal{M}^{ARUM}$ denote the set of distributions over real-valued $\eps$ that rationalize $p$ according to ARUM. Let $\mathcal{M}^{ARUM-E}$ denote the set of distributions over extended real-valued $\eps$ that rationalize $p$ according to ARUM-E. Let $\mathcal{M}^{ARUM-CS}$ denote the set of distributions over $(\eps,\eta)$ that rationalize $p$ according to ARUM-CS.
\end{defn}

\begin{remark}[Extended Reals]
ARUM-E requires working with addition in the extended reals. This paper defines $a + -\infty = -\infty$ for any $a \neq \infty$. Thus, when $\eps_k = -\infty$ we have $u_k + \eps_k = -\infty$ because the utility index $u_k$ is finite. Note that because $a > b$ only when $a > -\infty$, part (ii) of the definition of ARUM-E implies that for each $u \in U$, with probability $1$, $\eps_k$ is finite for some $k$.
\end{remark}

\begin{remark}[Normalizations]
For ARUM and ARUM-CS, it is a normalization to set $u_1 + \eps_1 = 0$ for alternative $1$, so that we can interpret the remaining $u_k + \eps_k$ numbers as differences relative to alternative $1$. By normalization I mean the empirical content of the model is the same with this extra assumption. The equality is not a normalization in ARUM-E because setting $u_1 + \eps_1 = 0$ implies that alternative $1$ is always considered.\footnote{This would imply that $\sup_{u \in U} p_1(u) = 1$ if $U = \mathbb{R}^K$ for example. I note that it is a normalization to set $u_1 = 0$ because utility here is always finite.} In particular, ARUM-E should not generally be interpreted as a model in differences. If $\eps_{1} = - \infty$ for example, then $\eps_k - \eps_1$ is not defined. Thus for ARUM-E, the analyst should \textit{not} make an assumption that one alternative has a set utility of $0$, unless the analyst also wishes to assume the alternative is always considered.
\end{remark}

\section{Observational Equivalence} \label{sec:obs}
I first present conditions under which the models are observationally equivalent, and then in Section~\ref{sec:iddisting} study how to distinguish between the models and the closely related question of identification of features of the distribution of consideration sets.

Say that $U$ has bounded utility differences if $\sup_{u \in U} |u_k - u_j| < \infty$, for any alternatives $j$ and $k$.
\begin{prop} \label{prop:obseq}
\begin{enumerate}[(i)]
    \item For any $U \subseteq \mathbb{R}^K$, ARUM-E and ARUM-CS are observationally equivalent, i.e. $p$ is consistent with ARUM-E if and only if it is consistent with ARUM-CS.
    \item If $U$ has bounded utility differences, then ARUM, ARUM-E, and ARUM-CS are observationally equivalent.
\end{enumerate}
\end{prop}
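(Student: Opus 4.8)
I would prove part (i) by exhibiting explicit measurable maps on shock spaces in both directions, and then obtain part (ii) by combining part (i) with the trivial inclusion ``every ARUM rationalization is an ARUM-E rationalization,'' so that the only remaining work is to show that under bounded utility differences an ARUM-E rationalization can be converted into an ARUM one. Throughout, the device is the same: construct a measurable map on the shock space whose change of variables leaves unchanged, for every $u\in U$, the argmax of $k\mapsto u_k+\eps_k$ (or its consideration-set restriction). Since each choice probability is the probability that this argmax is a given singleton, and the ``unique maximizer almost surely'' requirement is likewise a statement about argmax sets, preserving argmax sets simultaneously preserves both consistency conditions and transports the rationalizing distribution from one model to the other.

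\emph{Part (i).} Given $\nu\in\mathcal M^{ARUM-CS}$, set $\tilde\eps_k=\eps_k$ if $k\in S(\eta)$ and $\tilde\eps_k=-\infty$ otherwise, and let $\mu$ be the law of $\tilde\eps$. Because $S(\eta)$ is always nonempty, some coordinate of $\tilde\eps$ is finite; and since alternatives outside $S(\eta)$ contribute $-\infty$, we get $\argmax_k(u_k+\tilde\eps_k)=\argmax_{k\in S(\eta)}(u_k+\eps_k)$, which is exactly the ARUM-CS choice rule, so $\mu\in\mathcal M^{ARUM-E}$ and it generates $p$. Conversely, given $\mu\in\mathcal M^{ARUM-E}$, take $H=\overline{\mathbb R}^K$, let $\eta$ have law $\mu$, put $S(\eta)=\{k:\eta_k\in\mathbb R\}$ (and $S(\eta)=\{1\}$ on the null event that this is empty), and let $\eps'_k=\eps_k$ when $\eta_k$ is finite and $\eps'_k=0$ otherwise, with $\nu$ the law of $(\eps',\eta)$. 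Then $\eps'$ is real-valued almost surely, $S(\eta)$ is the set of finite coordinates of $\eta$, and $\eps'$ agrees with the original shock on $S(\eta)$, so the ARUM-CS rule reproduces the ARUM-E argmax and hence $p$.

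\emph{Part (ii).} As $\mathcal M^{ARUM}\subseteq\mathcal M^{ARUM-E}$ and part (i) equates $\mathcal M^{ARUM-E}\neq\emptyset$ with $\mathcal M^{ARUM-CS}\neq\emptyset$, it remains to show that, under bounded utility differences, $\mathcal M^{ARUM-E}\neq\emptyset$ implies $\mathcal M^{ARUM}\neq\emptyset$. Put $M=\sup_{u\in U}\max_{j,k}|u_k-u_j|<\infty$ and fix $\mu\in\mathcal M^{ARUM-E}$. Up to a $\mu$-null set the shock space splits into finitely many pieces indexed by the set $T=\{k:\eps_k=-\infty\}\subsetneq\{1,\dots,K\}$; on the piece with a given $T$ let $L=\min_{j\notin T}\eps_j$ (finite and measurable there) and redefine the coordinates in $T$ to equal $L-2M-1$, keeping the others fixed. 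The resulting $\eps''$ is everywhere finite and measurable in $\eps$, and for every $u\in U$ and every $k\in T$, taking $j^\star\in\argmax_{j\notin T}(u_j+\eps_j)$, we have $u_k+\eps''_k=u_k+L-2M-1<u_{j^\star}-M+L\le u_{j^\star}+\eps_{j^\star}$, so coordinates in $T$ never attain the maximum and $\argmax_k(u_k+\eps''_k)=\argmax_{j\notin T}(u_j+\eps_j)$ for all $u\in U$. Hence the law of $\eps''$ lies in $\mathcal M^{ARUM}$ and generates $p$.

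The crux is the truncation in part (ii). Replacing $-\infty$ by a fixed negative constant would not work, because the finite shocks themselves can be arbitrarily negative, so the replacement level must adapt to the realized finite coordinates; the reason a single adaptive rule can succeed \emph{uniformly over all} $u\in U$ is precisely the uniform bound $M$ on utility differences, which is exactly why part (ii) requires this hypothesis and part (i) does not. The remaining points --- measurability of the piecewise maps, almost-sure nonemptiness of $S(\eta)$, and the transfer of the no-ties condition (immediate since argmax sets are preserved for each fixed $u$) --- are routine.
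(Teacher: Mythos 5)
Your proposal is correct and follows essentially the same route as the paper: part (i) via the same two explicit shock-space maps (set non-considered coordinates to $-\infty$, respectively read off $S$ from the finite coordinates and patch the $-\infty$ entries), and part (ii) via an adaptive truncation of the infinite coordinates to a level tied to the realized finite shocks, which is exactly the construction in the paper's footnote and is made uniform over $u\in U$ by the bounded-utility-differences hypothesis. The only cosmetic difference is that you convert ARUM-E to ARUM after invoking part (i), whereas the paper converts ARUM-CS to ARUM directly.
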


Part (i) states that we can never distinguish between ARUM-E and ARUM-CS.\footnote{See Proposition 5 in \cite{barseghyan2021discrete} for a related random utility model.} Working with the extended reals thus provides an alternative representation to a two-step choice model.\footnote{Encoding constraints in this manner is used in convex analysis, cf. \cite{rockafellar2015convex}.} The more noteworthy result is (ii), which shows that when utility differences are bounded, it is impossible to distinguish such models from the classic additive random utility model.\footnote{Proposition 2 in \cite{jagabathula2021demand} shows observational equivalence of a general random utility model (with nonseparable shocks) and a general consideration set model. We differ by considering different models and by using variation in latent utility indices and not variation in the set of available alternatives.} Thus, while it may be intuitively unappealing to assume individuals consider all alternatives, this concern can be addressed within ARUM by allowing a sufficiently flexible distribution of disturbances. Note that ARUM can be interpreted as a special case of ARUM-CS in which $\eta$ is constant and thus independent of $\eps$. Thus, when $U$ has bounded utility differences, adding independence between $\eta$ and $\eps$ is without loss of generality for studying empirical content.

Observational equivalence over bounded sets means that empirical content results that use local structure have equivalent translations between the models. Thus, local empirical content characterizations of ARUM in \cite{koning1994compatibility} and \cite{koning2003discrete} are also characterizations of ARUM-E and ARUM-CS for bounded sets.\footnote{Conditions in \cite{koning1994compatibility} are the classical Williams-Daly-Zachary-McFadden conditions proven to characterize ARUM in \citep{mcfadden1981econometric}, except dropping the requirement that probabilities limit to $0$ and $1$ at extreme values of covariates.} Or stated differently, their ARUM results are automatically robust to consideration sets.

I next study when it is possible to distinguish between ARUM and the other models, and the closely related question of identification of the distribution of latent feasibility sets.

\subsection{Identification and Distinguishability} \label{sec:iddisting}
I begin by presenting sharp identification results for certain consideration set probabilities. I then use these to establish that ARUM can be distinguished from ARUM-E (or ARUM-CS) if and only if certain marginal consideration set probabilities are identified. With Proposition~\ref{prop:obseq} this establishes that in general, unbounded utility differences are necessary for point identification of these marginal probabilities.

The identification results concern probabilities of the form $\Pr(\eps_k > -\infty)$ for ARUM-E, and $\Pr(k \in S(\eta))$ for ARUM-CS, i.e. the probability $k$ is considered. These probabilities are relevant because they must be 1 for ARUM. I define identified sets for these marginal probabilities, which reflect that there are typically multiple distributions that can rationalize the structural choice probabilities. I define the identified set for $\Pr(\eps_k > -\infty)$ as the union over all possible distributions that rationalize $p$ according to ARUM-E:
\[
\Theta^{k,E} = \cup_{\mu \in \mathcal{M}^{ARUM-E}} \Pr_{\mu} (\eps_k > -\infty).
\]
Similarly for $\Pr(k \in S(\eta))$ I define
\[
  \Theta^{k,CS} = \cup_{\nu \in \mathcal{M}^{ARUM-CS}} \Pr_{\nu}(k \in S(\eta)).
\]
\begin{lemma} \label{lem:idsetequiv}
The identified sets for marginal consideration probabilities are convex and satisfy
    \[
    \Theta^{k,E} = \Theta^{k,CS}.
    \]
    Moreover, $\sup_{u \in U} p_k(u)$ is a lower bound for each set. That is, for any $c \in \Theta^{k,E}$, it follows that $\sup_{u \in U} p_k(u) \leq c$.
\end{lemma}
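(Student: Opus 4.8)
The plan is to prove the three claims in the order: the set equality $\Theta^{k,E}=\Theta^{k,CS}$ first (which lets me work with a single object), then convexity, then the lower bound.

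For the equality I would reuse the construction behind Proposition~\ref{prop:obseq}(i). Given $\mu\in\mathcal{M}^{ARUM-E}$, build a rationalizing $\nu$ for ARUM-CS by letting $\eta$ encode the ($\nu$-a.s.\ nonempty) realized set $\{k:\eps_k>-\infty\}$ and letting the ARUM-CS taste shocks agree with $\eps$ on that set (coordinates outside it are irrelevant to behavior and may be set to, say, $0$). Conversely, given $\nu\in\mathcal{M}^{ARUM-CS}$, build $\mu$ by replacing $\eps_k$ with $-\infty$ exactly when $k\notin S(\eta)$ and leaving it unchanged otherwise. In each direction the choice probabilities are unchanged --- an alternative with $\eps_k=-\infty$ is never a maximizer, just as an alternative outside $S(\eta)$ is never chosen --- and the event $\{\eps_k>-\infty\}$ is carried to $\{k\in S(\eta)\}$ and back. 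Hence the marginal probabilities match along this correspondence, so $\Theta^{k,E}=\Theta^{k,CS}$.

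For convexity it suffices to treat $\Theta^{k,E}$. If $c_0,c_1\in\Theta^{k,E}$ are realized by $\mu_0,\mu_1\in\mathcal{M}^{ARUM-E}$, consider the mixture $\mu_\lambda=\lambda\mu_0+(1-\lambda)\mu_1$. Because the map from a disturbance distribution to choice probabilities is affine, $\mu_\lambda$ still reproduces $p$, and it clearly keeps $\eps_k<\infty$ a.s., so $\mu_\lambda\in\mathcal{M}^{ARUM-E}$; moreover $\Pr_{\mu_\lambda}(\eps_k>-\infty)=\lambda c_0+(1-\lambda)c_1$. Thus $\Theta^{k,E}$ contains the whole segment between any two of its points. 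For the lower bound, fix any $\mu\in\mathcal{M}^{ARUM-E}$ with $\Pr_\mu(\eps_k>-\infty)=c$. For every $u\in U$, the event $\{u_k+\eps_k>\max_{j\neq k}u_j+\eps_j\}$ forces $u_k+\eps_k>-\infty$, since $-\infty$ does not strictly exceed any extended real; as $u_k$ is finite this gives $\eps_k>-\infty$. Hence $p_k(u)=\Pr_\mu(u_k+\eps_k>\max_{j\neq k}u_j+\eps_j)\le\Pr_\mu(\eps_k>-\infty)=c$, and taking $\sup_{u\in U}$ yields $\sup_{u\in U}p_k(u)\le c$; letting $c$ range over $\Theta^{k,E}=\Theta^{k,CS}$ finishes both inequalities.

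None of the steps is long. The delicate points I expect to spend the most care on are (a) checking in the equality step that overwriting off-consideration coordinates --- by $-\infty$ in one direction, by arbitrary finite values in the other --- genuinely leaves all $p_k(u)$ unchanged, and (b) the extended-real bookkeeping in the lower-bound step, where one must invoke that $-\infty>a$ fails for every extended real $a$, including $a=-\infty$, so that being the strict maximizer rules out $\eps_k=-\infty$.
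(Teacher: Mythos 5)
Your proposal is correct and follows essentially the same route as the paper: the equality is inherited from the bidirectional construction in the proof of Proposition~\ref{prop:obseq}(i) (with off-consideration coordinates set to $0$ in one direction and $-\infty$ in the other), convexity comes from mixing two rationalizing distributions and using linearity of the choice-probability map, and the lower bound follows because the event that $k$ is the strict maximizer is contained in $\{\eps_k > -\infty\}$. No gaps; the extended-real care you flag in point (b) is exactly the point the paper relies on.
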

Page 1980 in \cite{barseghyan2021discrete} notes a lower bound in a related context.

Distinguishability and identification questions hinge on whether $k$ can be made \textit{extremely attractive}, i.e. when
\[
\sup_{u \in U} \min_{j \neq k} \{ u_k - u_j \} = \infty.
\]
Let $\mathcal{K}^{EA}$ be the set of alternatives that can be made extremely attractive. This set $\mathcal{K}^{EA}$ is empty if and only if $U$ has bounded utility differences. As a final definition, say $u \in U$ is $k$-maximal if for any $w \in U$ and $j$, $u_k - u_j \geq w_k - w_j$.
\begin{prop} \label{prop:id}
Suppose $p$ is consistent with ARUM-E or equivalently ARUM-CS.
\begin{enumerate}[(i)]
    \item If $k \in \mathcal{K}^{EA}$, then
    \[
    \Theta^{k,E} = \Theta^{k,CS} = \sup_{u \in U} p_k(u).
    \]
    \item If $U$ has bounded utility differences (equivalently, $\mathcal{K}^{EA}$ is empty) and $\Theta^{k,E}$ is a singleton, then
    \[
    \Theta^{k,E} = \Theta^{k,CS} = \sup_{u \in U} p_k(u) = 1.
    \]
    \item If $U$ has bounded utility differences (equivalently, $\mathcal{K}^{EA}$ is empty) and contains a $k$-maximal point $u^*$, then
    \[
    \Theta^{k,E} = \Theta^{k,CS} = \left[p_k(u^*),1 \right] =  \left[ \sup_{u \in U} p_k(u), 1 \right].
    \]
\end{enumerate}
\end{prop}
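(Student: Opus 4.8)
The plan is to reduce everything to $\Theta^{k,E}$ using Lemma~\ref{lem:idsetequiv} (it is convex, hence an interval in $[0,1]$; it equals $\Theta^{k,CS}$; and $\sup_{u\in U}p_k(u)$ is a lower bound for it) together with Proposition~\ref{prop:obseq}(ii), which guarantees an ARUM rationalization $\mu$ of $p$ whenever $U$ has bounded utility differences; since such a $\mu$ has $\eps_k>-\infty$ a.s., this yields $1\in\Theta^{k,E}$. Part (i) is then a limiting argument, part (iii) a truncation construction, and part (ii) combines the two.

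For part (i), fix $\mu\in\mathcal{M}^{ARUM-E}$ and, using $k\in\mathcal{K}^{EA}$, choose $u^{(n)}\in U$ with $\min_{j\neq k}(u^{(n)}_k-u^{(n)}_j)\to\infty$. The key step is that the indicator of $\{u^{(n)}_k+\eps_k>\max_{j\neq k}(u^{(n)}_j+\eps_j)\}$ converges $\mu$-a.s.\ to the indicator of $\{\eps_k>-\infty\}$: on $\{\eps_k>-\infty\}$ a competitor $j$ with $\eps_j=-\infty$ is beaten outright, and for finite $\eps_j$ one has $(u^{(n)}_k-u^{(n)}_j)+(\eps_k-\eps_j)\to+\infty$, while on $\{\eps_k=-\infty\}$ the event never holds since some $\eps_j$ is a.s.\ finite. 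Bounded convergence then gives $p_k(u^{(n)})\to\Pr_\mu(\eps_k>-\infty)$, so $\Pr_\mu(\eps_k>-\infty)\le\sup_{u\in U}p_k(u)$; combined with the lower bound this is an equality for every $\mu$, so $\Theta^{k,E}=\{\sup_{u\in U}p_k(u)\}$.

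For part (iii), I first observe $\sup_{u\in U}p_k(u)=p_k(u^*)$: writing $p_k(u)=\Pr_{\mu_0}\big(\bigcap_{j\neq k}\{\eps_k-\eps_j>u_j-u_k\}\big)$ for an ARUM rationalization $\mu_0$, $k$-maximality of $u^*$ makes each of these events a subset of the corresponding event at $u^*$, hence $p_k(u)\le p_k(u^*)$. As $\Theta^{k,E}$ is a convex subset of $[p_k(u^*),1]$ containing $1$, it remains to produce a rationalization with consideration probability exactly $p_k(u^*)$. Let $E^*=\{u^*_k+\eps_k>\max_{j\neq k}(u^*_j+\eps_j)\}$, so $\mu_0(E^*)=p_k(u^*)$, and define $\mu$ from $\mu_0$ by leaving $\eps$ unchanged on $E^*$ and setting $\eps_k=-\infty$ (other coordinates unchanged) on $(E^*)^c$; then $\Pr_\mu(\eps_k>-\infty)=p_k(u^*)$. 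This $\mu$ still rationalizes $p$: $k$-maximality gives $\{k\text{ is the argmax at }u\}\subseteq E^*$ for all $u\in U$, so the truncation destroys no realization in which $k$ is chosen, and on $(E^*)^c$ alternative $k$ is not the argmax at $u$ either, so deleting it leaves the choice among the remaining alternatives unchanged. Hence $p_k(u^*)\in\Theta^{k,E}$ and $\Theta^{k,E}=\Theta^{k,CS}=[p_k(u^*),1]=[\sup_{u\in U}p_k(u),1]$.

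For part (ii), $1\in\Theta^{k,E}$ and the singleton hypothesis force $\Theta^{k,E}=\Theta^{k,CS}=\{1\}$, and the lower bound gives $\sup_{u\in U}p_k(u)\le 1$; the substance is the reverse inequality, which I would prove by contradiction. Suppose $s:=\sup_{u\in U}p_k(u)<1$, put $M_j:=\sup_{u\in U}(u_k-u_j)<\infty$, and $E^*:=\bigcap_{j\neq k}\{\eps_k-\eps_j>-M_j\}$. Since $u_k-u_j\le M_j$, we still have $\{k\text{ is the argmax at }u\}\subseteq E^*$ for all $u\in U$, so the truncation of part (iii) --- replacing $\eps_k$ by $-\infty$ off $E^*$ --- turns any ARUM rationalization $\mu_0$ into an ARUM-E rationalization with consideration probability $\mu_0(E^*)\in\Theta^{k,E}$. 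It therefore suffices to exhibit a base rationalization $\mu_0$ with $\mu_0(E^*)<1$, i.e.\ one in which $\eps_k$ falls below $-M_j+\eps_j$ for some $j$ with positive probability; then $\Theta^{k,E}$ contains a point below $1$, contradicting $\Theta^{k,E}=\{1\}$. Constructing such a $\mu_0$ when $s<1$ --- equivalently, showing that the lower bound $\sup_{u\in U}p_k(u)$ of Lemma~\ref{lem:idsetequiv} is approached by some rationalization once $U$ has bounded differences --- is the main obstacle, and is where I would re-use the recoding underlying Proposition~\ref{prop:obseq}(ii), which sends the shocks of non-considered alternatives far below those of the considered ones.
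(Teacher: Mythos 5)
Parts (i) and (iii) of your proposal are correct and take essentially the paper's own route. For (i), the paper phrases your a.s.-convergence-plus-dominated-convergence step as continuity from below applied to the increasing events $\{\eps_k>\max_{j\neq k}(-s+\eps_j)\}$, but the computation is the same, and both arguments close by sandwiching with the lower bound of Lemma~\ref{lem:idsetequiv}. For (iii), your truncation (send $\eps_k$ to $-\infty$ exactly on the event that $k$ loses at the $k$-maximal point $u^*$) is the paper's Step 1 verbatim, and the endpoints are joined by convexity and by $1\in\Theta^{k,E}$ from Proposition~\ref{prop:obseq}(ii), as you do.

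Part (ii) is where you have a genuine, self-acknowledged gap, and your proposed repair does not work as stated. What you do complete --- $1\in\Theta^{k,E}$ via observational equivalence with ARUM, so the singleton must be $\{1\}$ --- is in fact all that the paper's own one-line proof of (ii) supplies. The remaining content is that $\sup_{u\in U}p_k(u)<1$ would force $\Theta^{k,E}$ to contain a point below $1$. Your truncation event $E^*=\bigcap_{j\neq k}\{\eps_k-\eps_j>-M_j\}$ with $M_j=\sup_{u\in U}(u_k-u_j)$ only satisfies $\mu_0(E^*)\ge\sup_{u\in U}p_k(u)$, and for $K\ge 3$ it can equal the whole space for every natural base rationalization even when the supremum is far from $1$. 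Concretely, take $K=3$, $U=\{(0,0,1),(0,1,0)\}$ (bounded differences, no $1$-maximal point) and $\eps$ equal to $(0,-10,-\tfrac12)$ or $(0,-\tfrac12,-10)$ with probability $\tfrac12$ each: then $\sup_u p_1(u)=\tfrac12$, but $M_2=M_3=0$, so $E^*=\{\eps_1>\eps_2\}\cap\{\eps_1>\eps_3\}$ has probability $1$ and the truncation does nothing. The maximal safe truncation event is the complement of $\bigcup_{u\in U}\{k\text{ wins at }u\}$, which is smaller than $(E^*)^c$ and is also null under this $\mu_0$. So the substance of (ii) is precisely the existence of \emph{some} rationalization under which $k$ loses everywhere on $U$ with positive probability; in the example one can build it by reallocating mass so that alternative $1$ wins at both menus on an event of probability $q>0$ and at neither on a matching event, but this is a construction you must actually carry out, and the recoding behind Proposition~\ref{prop:obseq}(ii) does not deliver it, since that recoding preserves the property that $k$ wins somewhere almost surely. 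Until that step is supplied, your part (ii) establishes $\Theta^{k,E}=\Theta^{k,CS}=\{1\}$ but not the final equality $\sup_{u\in U}p_k(u)=1$.
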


Part (i) states that point identification of marginal consideration set probabilities is possible for any alternative that can be made extremely attractive. Part (ii) states that with bounded utility differences, point identification of these marginal consideration probabilities is equivalent to $\sup_{u \in U} p_k(u) = 1$. Part (iii) characterizes the identified set for marginal consideration probabilities for domains that contain a $k$-maximal point. An example of a set that contains $k$-maximal points for any $k$ is U a Cartesian product of compact intervals or more generally, a product of compact subsets of $\mathbb{R}$. One such set is $\{ 0, 1\}^K$, which has a $k$-maximal element for each $k$. For example, the $1$-maximal element is $(1,0, \ldots, 0)$. 

Unbounded regressors have been used previously to identify the distribution of consideration sets in a variety of settings \citep{abaluck2021consumers,barseghyan2021discrete,hyung2021,kashaev2023peer}.\footnote{\cite{abaluck2021consumers} identify changes in consideration set probabilities as prices change, for a different generalization of ARUM. Loosely, $S(\eta)$ can depend on $u$ (negative prices) in their setup in certain ways. That paper requires prices to limit to infinity to identify the levels of consideration set probabilities.} The primary novelty here is to highlight that for the ARUM family studied here, large support is typically necessary for point identification of the marginal distributions of consideration sets.

I present three implications of Proposition~\ref{prop:id}. To state the first result I generalize the notation a bit to accommodate a growing domain over which $p$ is identified. To that end, for any $U \subseteq \mathbb{R}^K$ let $\Theta^{k,E}(U) = \cup_{\mu \in \mathcal{M}^{ARUM-E}(U)} \Pr_{\mu} (\eps_k > -\infty)$, where $\mathcal{M}^{ARUM-E}(U)$ denotes distributions that rationalize $p$ according to ARUM-E when $p$ is identified over $U$. Define $\Theta^{k,CS}(U)$ analogously.
\begin{cor}[Discontinuous Identification] \label{cor:disc}
Let $\{ U^s \}_{s = 1}^{\infty}$ be an increasing collection of subsets of $\mathbb{R}^K$, i.e. $U^s \subseteq U^m$ for $s \leq m$. Assume further that each set $U^s$ is a compact rectangle, i.e. $U^s = \times_{j = 1}^K [\underline{u}^s_j, \overline{u}_j^s]$. Let $U^{\infty} := \cup_{s = 1}^{\infty} U^s$.
\begin{enumerate}[(i)]
    \item For any $k$,
    \[
    \left[\sup_{u \in U^{\infty}} p_k(u),1 \right] \subseteq \cap_{s = 1}^{\infty} \Theta^{k,E}(U^s) = \cap_{s = 1}^{\infty} \Theta^{k,CS}(U^s).
    \]
    \item For any $k$ that can be made extremely attractive with domain $U^{\infty}$,
    \[
    \sup_{u \in U^{\infty}} p_k(u) = \Theta^{k,E}(U^{\infty}) = \Theta^{k,CS}(U^{\infty}).
    \]
\end{enumerate}

\end{cor}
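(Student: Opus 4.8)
The plan is to read off both parts from Lemma~\ref{lem:idsetequiv} and Proposition~\ref{prop:id}, which already carry all the substance; the corollary is essentially a bookkeeping statement about nested domains. First I would record the standing observation that since $p$ is consistent with ARUM-E (equivalently ARUM-CS) on $U^\infty$, it is consistent on every $U^s$: any $\mu$ rationalizing $p$ over $U^\infty$ also rationalizes it over $U^s \subseteq U^\infty$, so $\mathcal{M}^{ARUM-E}(U^m) \subseteq \mathcal{M}^{ARUM-E}(U^s)$ whenever $s \le m$, and likewise for the ARUM-CS classes. Hence the sets $\Theta^{k,E}(U^s)$ and $\Theta^{k,CS}(U^s)$ are nonempty and nonincreasing in $s$, and $\Theta^{k,E}(U^s) = \Theta^{k,CS}(U^s)$ for each $s$ by Lemma~\ref{lem:idsetequiv}. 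I would also note that each compact rectangle $U^s = \times_{j=1}^K [\underline{u}^s_j, \overline{u}_j^s]$ has bounded utility differences and contains a $k$-maximal point, namely the corner $u^{*,s}$ with $u^{*,s}_k = \overline{u}^s_k$ and $u^{*,s}_j = \underline{u}^s_j$ for $j \neq k$, since then $u^{*,s}_k - u^{*,s}_j = \overline{u}^s_k - \underline{u}^s_j \geq w_k - w_j$ for every $w \in U^s$ and every $j$.

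For part (i), I would apply Proposition~\ref{prop:id}(iii) to each $U^s$ to get $\Theta^{k,E}(U^s) = [\, p_k(u^{*,s}),\,1\,] = [\, \sup_{u \in U^s} p_k(u),\, 1\,]$, and similarly for $\Theta^{k,CS}(U^s)$. Writing $a_s := \sup_{u \in U^s} p_k(u)$, the sequence $(a_s)$ is nondecreasing because the $U^s$ are increasing, and $\sup_s a_s = \sup_{u \in \cup_s U^s} p_k(u) = \sup_{u \in U^\infty} p_k(u)$. Then a real $x \in [0,1]$ lies in $\cap_s [a_s, 1]$ iff $x \ge a_s$ for all $s$, i.e. iff $x \ge \sup_{u \in U^\infty} p_k(u)$, so $\cap_s \Theta^{k,E}(U^s) = [\, \sup_{u\in U^\infty} p_k(u),\, 1\,]$, and the same with $\Theta^{k,CS}$ in place of $\Theta^{k,E}$. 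This yields the displayed equality of intersections together with the claimed containment (in fact equality, not merely containment).

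For part (ii), the phrase ``$k$ can be made extremely attractive with domain $U^\infty$'' is by definition the statement $k \in \mathcal{K}^{EA}$ for $U = U^\infty$, so I would simply invoke Proposition~\ref{prop:id}(i) with that domain, which gives $\Theta^{k,E}(U^\infty) = \Theta^{k,CS}(U^\infty) = \sup_{u \in U^\infty} p_k(u)$ immediately.

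I do not expect a real obstacle here — the only points needing a line of care are the monotonicity of the rationalizing classes under shrinking the domain, checking that a compact rectangle actually furnishes a $k$-maximal corner so that Proposition~\ref{prop:id}(iii) applies verbatim, and the routine interchange $\sup_s \sup_{u \in U^s} = \sup_{u \in U^\infty}$. The one thing worth flagging in the writeup is the interpretive punchline: part (i) shows the limiting intersection keeps width $1 - \sup_{u \in U^\infty} p_k(u)$, which is strictly positive whenever $p_k$ stays bounded away from $1$ on $U^\infty$, while part (ii) collapses the identified set to the single point $\sup_{u \in U^\infty} p_k(u)$ once extreme attractiveness is attainable on $U^\infty$ — the discontinuity between identifying ``towards'' and ``at'' infinity.
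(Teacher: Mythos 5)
Your proof is correct and follows exactly the route the paper intends (the corollary is stated without a separate proof, as an immediate consequence of Lemma~\ref{lem:idsetequiv} and Proposition~\ref{prop:id}): apply Proposition~\ref{prop:id}(iii) to each compact rectangle $U^s$ via its $k$-maximal corner and intersect, and apply Proposition~\ref{prop:id}(i) to $U^{\infty}$ for part (ii). Your added observations — that restriction of the domain only enlarges the rationalizing classes, and that the containment in (i) is in fact an equality under the rectangle hypothesis — are accurate and slightly sharpen the stated result.
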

Putting these results together with Proposition~\ref{prop:id}(iii), we conclude that if $U^{\infty}$ is rich enough so that $k$ can be made extremely attractive, the left hand side of Corollary~\ref{cor:disc}(i) is a point if and only if $k$ is considered with probability $1$. In general, there is a severe discontinuity of identification contrasting the case of bounded variation in utility indices with the limit of full variation. For any set $U^s$, no matter how big, the minimal width of the identified set is $1 - \sup_{u \in U^{\infty}} p_k(u)$, but in the limit it is a single point. Thus, it is not the case that large bounded variation resembles the case of unbounded variation. See \cite{magnac2007identification} and \cite{khan2010irregular} for other contexts with such a discontinuity, and how this makes estimation challenging.

The rest of the paper keeps $U$ fixed.

\begin{cor}[Distinguishability] \label{ref:cor1}
Assume $p$ is consistent with ARUM-E or equivalently ARUM-CS. Assume each alternative can be made extremely attractive, i.e. $\mathcal{K}^{EA} = \{1, \ldots, K\}$. The following are equivalent:
\begin{enumerate}[(i)]
    \item $p$ is consistent with ARUM.
    \item $\sup_{u \in U} p_k(u) = 1$ for each $k$.
\end{enumerate}
\end{cor}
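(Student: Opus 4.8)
The plan is to reduce the claimed equivalence to two facts already established: the point-identification result Proposition~\ref{prop:id}(i) and the lower bound in Lemma~\ref{lem:idsetequiv}. Intuitively, the corollary just says that ARUM-consistency is the same as having every marginal consideration probability pinned down at the value $1$, and under $\mathcal{K}^{EA} = \{1,\ldots,K\}$ these probabilities are pinned down at $\sup_{u\in U} p_k(u)$.

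For $(i)\Rightarrow(ii)$: Suppose $p$ is consistent with ARUM and let $\mu\in\mathcal{M}^{ARUM}$ be a real-valued rationalization. Then $\Pr_\mu(\eps_k>-\infty)=1$, so $1\in\Theta^{k,E}$ for every $k$. Since by hypothesis each alternative can be made extremely attractive, Proposition~\ref{prop:id}(i) applies to every $k$ and gives $\Theta^{k,E}=\{\sup_{u\in U}p_k(u)\}$; as this singleton contains $1$, we get $\sup_{u\in U}p_k(u)=1$ for each $k$. (If one prefers a self-contained argument: pick $u^n\in U$ with $\min_{j\neq k}(u^n_k-u^n_j)\to\infty$; then the event $\{u^n_k+\eps_k>u^n_j+\eps_j\ \forall j\neq k\}$ contains $\{\eps_k-\eps_j>-M_n\ \forall j\neq k\}$ with $M_n\to\infty$, and along a subsequence with $M_n\uparrow\infty$ this increases to the full-measure event that all $\eps_j$ are finite, so $p_k(u^n)\to1$.)

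For $(ii)\Rightarrow(i)$: Suppose $\sup_{u\in U}p_k(u)=1$ for each $k$. Since $p$ is consistent with ARUM-E, fix any $\mu\in\mathcal{M}^{ARUM-E}$. The lower-bound half of Lemma~\ref{lem:idsetequiv} gives $\sup_{u\in U}p_k(u)\le\Pr_\mu(\eps_k>-\infty)$, hence $\Pr_\mu(\eps_k>-\infty)=1$ for every $k$. Combining this with the ARUM-E requirement $\eps_k<\infty$ $\mu$-a.s. and a union bound over the $K$ coordinates, $\mu$ assigns probability one to $\{\eps\in\mathbb{R}^K\}$. Restricting $\mu$ to $\mathbb{R}^K$ discards only the $\mu$-null set $\{\exists k:\eps_k=-\infty\}$ and therefore leaves all choice probabilities unchanged, yielding a genuine real-valued distribution that rationalizes $p$ and satisfies the extra condition of Definition~\ref{defarume}; so $p$ is consistent with ARUM.

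The two directions are of different character: the hypothesis $\mathcal{K}^{EA}=\{1,\ldots,K\}$ is used only in $(i)\Rightarrow(ii)$, where it licenses the appeal to Proposition~\ref{prop:id}(i) (or the direct sequence argument) for every $k$ simultaneously, whereas $(ii)\Rightarrow(i)$ needs only Lemma~\ref{lem:idsetequiv}. I expect the only genuine (and minor) obstacle to be the measure-theoretic bookkeeping in $(ii)\Rightarrow(i)$: checking that an ARUM-E rationalization all of whose coordinates are almost surely finite can be replaced by an honest $\mathbb{R}^K$-valued distribution without perturbing any $p_k(u)$ — which holds precisely because the set where some $\eps_k=-\infty$ is $\mu$-null.
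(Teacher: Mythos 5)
Your proof is correct and takes essentially the same approach as the paper's: both directions reduce to Proposition~\ref{prop:id}(i) together with the observation that ARUM corresponds to $\Pr(\eps_k>-\infty)=1$ for every $k$, plus the (implicit in the paper, explicit in your write-up) union bound and restriction of an a.s.\ finite ARUM-E rationalization to $\mathbb{R}^K$. The one minor difference is that for $(ii)\Rightarrow(i)$ you invoke only the lower bound of Lemma~\ref{lem:idsetequiv} rather than the full point-identification statement, which correctly reveals that this direction does not actually need the hypothesis $\mathcal{K}^{EA}=\{1,\ldots,K\}$; the paper routes both directions through Proposition~\ref{prop:id}(i).
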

This characterizes the extra empirical restrictions of ARUM.

\begin{cor}[Nontrivial Rationalizations] \label{ref:cornontrivial}
Assume $p$ is consistent with either ARUM-E or ARUM-CS, and that for some $k$, $\sup_{u \in U} p_k(u) < 1$. Assume $U$ has bounded utility differences and contains a $k$-maximal point. It follows that:
\begin{enumerate}[(i)]
    \item $p$ can be rationalized according to ARUM-E with a distribution $\mu \in \mathcal{M}^{ARUM-E}$ that satisfies
    \[
    \Pr_{\mu}(\eps_k > -\infty) < 1.
    \]    
    \item $p$ can be rationalized according to ARUM-CS with a distribution $\nu \in \mathcal{M}^{ARUM-CS}$ that satisfies
    \[
    \Pr_{\nu}(k \in S(\eta)) < 1.
    \]    
\end{enumerate}
\end{cor}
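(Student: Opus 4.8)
The plan is to deduce this directly from Proposition~\ref{prop:id}(iii), whose hypotheses are exactly those maintained here. First note that by Proposition~\ref{prop:obseq}(i) consistency with ARUM-E and consistency with ARUM-CS are equivalent, so under the maintained assumption $p$ is consistent with both models and both statements (i) and (ii) are in play simultaneously.

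Now fix the alternative $k$ from the statement. Since $U$ has bounded utility differences and contains a $k$-maximal point, Proposition~\ref{prop:id}(iii) gives
\[
\Theta^{k,E} = \Theta^{k,CS} = \left[ \sup_{u \in U} p_k(u), 1 \right].
\]
By hypothesis $\sup_{u \in U} p_k(u) < 1$, so this interval is nondegenerate; in particular its left endpoint $c := \sup_{u \in U} p_k(u)$ is an element of $\Theta^{k,E}$ with $c < 1$. Unwinding the definition $\Theta^{k,E} = \cup_{\mu \in \mathcal{M}^{ARUM-E}} \Pr_{\mu}(\eps_k > -\infty)$, membership $c \in \Theta^{k,E}$ means precisely that there is a distribution $\mu \in \mathcal{M}^{ARUM-E}$ with $\Pr_{\mu}(\eps_k > -\infty) = c < 1$, which is (i). Applying the identical argument to $\Theta^{k,CS} = \cup_{\nu \in \mathcal{M}^{ARUM-CS}} \Pr_{\nu}(k \in S(\eta))$ yields a distribution $\nu \in \mathcal{M}^{ARUM-CS}$ with $\Pr_{\nu}(k \in S(\eta)) = c < 1$, which is (ii).

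Because the corollary is essentially a restatement of one boundary case of Proposition~\ref{prop:id}(iii), I do not expect a genuine obstacle; the only point requiring care is the translation between ``a value lies in the identified set'' and ``some rationalizing distribution attains that value,'' which is immediate from how $\Theta^{k,E}$ and $\Theta^{k,CS}$ are defined. If one instead wanted a proof not citing Proposition~\ref{prop:id}(iii), the burden would shift to reproving its $\supseteq$ inclusion: constructing a rationalizing $\mu$ (resp.\ $\nu$) under which $k$ is considered with probability exactly $\sup_{u \in U} p_k(u)$. At the $k$-maximal point $u^*$, alternative $k$ is chosen exactly when it is considered, which pins the consideration probability at $p_k(u^*)$, and one then checks that this candidate distribution still reproduces $p_k(u)$ and the remaining choice probabilities at every other $u \in U$, using bounded utility differences. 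That construction is the only nontrivial ingredient, and it is already supplied by the earlier result.
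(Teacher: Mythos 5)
Your proof is correct and follows exactly the paper's route: the paper's own proof of this corollary is the one-line remark that it is immediate from Proposition~\ref{prop:id}(iii), and you simply spell out the (correct) translation from membership of $\sup_{u\in U}p_k(u)<1$ in the identified set to the existence of a rationalizing distribution attaining that consideration probability.
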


This establishes that if the choice probability for some alternative $k$ is separated from $1$, then we we can rationalize $p$ with a distribution in which some alternative is not considered with positive probability. Thus, it is ``as if'' some individuals do not consider all alternatives.

\section{Counterfactuals} \label{sec:counter}
This section builds on the previous results to study implications for counterfactuals. I analyze two types of counterfactual interventions. The first is a utility index intervention, which involves setting the utility index $u$ to a new value, holding the distribution of unobservables fixed. If the utility index for each alternative is its (negative) price, this amounts to a counterfactual price change. The second is an attention intervention, which changes consideration set formation, holding everything else fixed. When we identify probabilities over a region with bounded utility differences, the models all deliver the same counterfactual bounds for utility-interventions, but deliver contrasting bounds for attention interventions.

\subsection{Utility Index Interventions}
Suppose the analyst knows $p$ over $U$, and is interested in theory-consistent counterfactuals at a new value $u^C \not\in U$. I formulate counterfactual restrictions in terms of an extension $p^C : U \cup \{u^c\} \rightarrow \Delta^K$ that agrees with $p$ on the set $U$. That is, $p^C(u) = p(u)$ for $u \in U$. The interpretation is $p^C(u^C)$ represents choice probabilities at the new value of utility indices. If $p^C$ is consistent with ARUM, say it is ARUM-consistent, and similarly for the other models. The set of ARUM-consistent functions is given by $\mathcal{P}_{ARUM}^C$. For ARUM, define the identified set for counterfactuals at the new value $u^C$ as
\[
\Theta^{C}_{ARUM} = \cup_{p^C \in \mathcal{P}^C_{ARUM}} p^C(u^C).
\]
For any element $a \in \Theta^{C}_{ARUM}$, there is a distribution over $\eps$ that matches the known probabilities over $U$, and that yields $a$ as the choice probability at the new value $u^C$. For ARUM-E and ARUM-CS, define the identified sets analogously.
\begin{prop} \label{prop:utilityintervention}
\begin{enumerate}[(i)]
    \item If $U \subseteq \mathbb{R}^K$ has bounded utility differences, then for any $u^C \in \mathbb{R}^K$,
    \[
    \Theta^C_{ARUM} = \Theta^{C}_{ARUM-E} = \Theta^{C}_{ARUM-CS}.   
    \]
    \item For any $U \subseteq \mathbb{R}^K$ and $u^C \in \mathbb{R}^K$,
    \[
    \Theta^{C}_{ARUM-E} = \Theta^{C}_{ARUM-CS}.
    \]
\end{enumerate}
\end{prop}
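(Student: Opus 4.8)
The plan is to reduce both claims to Proposition~\ref{prop:obseq} applied on the enlarged domain $U' := U \cup \{u^C\}$. The key observation is that every element of any of the three counterfactual identified sets is, by construction, the value $p^C(u^C)$ of an extension $p^C : U' \to \Delta^K$ that agrees with $p$ on $U$ and is consistent with the relevant model; but such an extension is nothing more than a structural choice probability function on the domain $U'$ in the sense of Section~\ref{sec:models}. Hence the sets of admissible extensions $\mathcal{P}^C_{ARUM}$, $\mathcal{P}^C_{ARUM-E}$, $\mathcal{P}^C_{ARUM-CS}$ are exactly the functions on $U'$ that (a) restrict to $p$ on $U$ and (b) are consistent with ARUM, ARUM-E, ARUM-CS respectively, and each identified set is the image of the common ``restrict to $p$'' constraint under evaluation at $u^C$.

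For part (ii), I would apply Proposition~\ref{prop:obseq}(i) with the domain taken to be $U'$: for any subset of $\mathbb{R}^K$, and in particular for $U'$, a function is consistent with ARUM-E if and only if it is consistent with ARUM-CS. Intersecting this equivalence with the requirement of agreeing with $p$ on $U$ gives $\mathcal{P}^C_{ARUM-E} = \mathcal{P}^C_{ARUM-CS}$, and taking the union of $p^C(u^C)$ over this common set yields $\Theta^C_{ARUM-E} = \Theta^C_{ARUM-CS}$.

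For part (i), the extra ingredient is that $U'$ inherits bounded utility differences from $U$: since $\sup_{u \in U} |u_k - u_j| < \infty$ and $|u^C_k - u^C_j| < \infty$ for the single added point, we get $\sup_{u \in U'} |u_k - u_j| < \infty$ for all $j, k$. Proposition~\ref{prop:obseq}(ii) then applies on $U'$, so a function on $U'$ is consistent with ARUM if and only if with ARUM-E if and only if with ARUM-CS; as above this gives $\mathcal{P}^C_{ARUM} = \mathcal{P}^C_{ARUM-E} = \mathcal{P}^C_{ARUM-CS}$ and hence equality of all three identified sets.

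I expect no serious obstacle: the content is entirely in the reframing of the counterfactual identified set as an observational-equivalence statement on $U'$, after which the conclusion is immediate from Proposition~\ref{prop:obseq}. The only points requiring care are the elementary verification that appending a single point preserves bounded utility differences, and the trivial edge case $u^C \in U$, in which every admissible $p^C$ satisfies $p^C(u^C) = p(u^C)$ and all three identified sets collapse to the singleton $\{ p(u^C) \}$.
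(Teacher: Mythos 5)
Your proposal is correct and follows essentially the same route as the paper: both reduce the claim to Proposition~\ref{prop:obseq} applied to the enlarged domain $U \cup \{u^C\}$, using part (ii) of that proposition for the bounded-utility-differences case and part (i) for the general case. Your write-up is in fact slightly more careful than the paper's, since you explicitly verify that adjoining the single finite point $u^C$ preserves bounded utility differences and note that the counterfactual identified sets are images of the sets of model-consistent extensions under evaluation at $u^C$.
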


Part (i) states that for utility interventions, the identified sets for counterfactuals are the same for the three models when utility differences are bounded. Note that for general unbounded $U \subseteq \mathbb{R}^K$, it is possible that
\[
\Theta^C_{ARUM} \neq \Theta^{C}_{ARUM-E}
\]
because the left hand side can be empty when the right hand side is not. This can happen when $p$ is consistent with ARUM-E but not ARUM. In this case, there is no extension of $p$ to the counterfactual point $p^C(u^C)$ using the model ARUM, and so $\Theta^C_{ARUM}$ is empty.

\subsection{Attention Interventions} \label{sec:attention}
An attention intervention changes consideration set formation without changing preferences. I record a first fact.
\begin{fact}
Attention interventions in ARUM do not change structural choice probabilities.
\end{fact}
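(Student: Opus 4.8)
The plan is to exploit the observation, already noted after Proposition~\ref{prop:obseq}, that ARUM is literally the special case of ARUM-CS in which the consideration shock $\eta$ is degenerate and $S(\eta)$ equals the grand set $\{1,\ldots,K\}$ almost surely. So first I would take $p$ consistent with ARUM together with a rationalizing $\mu \in \mathcal{M}^{ARUM}$, and re-express it as an ARUM-CS rationalization $\nu$: let $\eps \sim \mu$, let $\eta$ be constant, and set $S(\eta) \equiv \{1,\ldots,K\}$. Since $k \in S(\eta)$ with $\nu$-probability one for every $k$, the event in Definition~\ref{defcs}(ii) coincides exactly with the event in Definition~\ref{defarume}(ii), so $\nu$ rationalizes $p$ and moreover $\Pr_\nu(k \in S(\eta)) = 1$ for each $k$.

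Next I would invoke the notion of an attention intervention: it replaces the attention function $S$ by some $S^C$ while holding the joint law of $(\eps,\eta)$ fixed, where $S^C$ can only \emph{expand} consideration (e.g., it forces a designated alternative to always be considered). But $S(\eta) = \{1,\ldots,K\}$ is already the maximal element of $\mathcal{S}$, so there is no room to expand: necessarily $S^C(\eta) = \{1,\ldots,K\} = S(\eta)$ for every $\eta$. Substituting $S^C$ into Definition~\ref{defcs}(ii) therefore reproduces the same probability for each alternative at each $u \in U$, and hence the structural choice probabilities are unchanged.

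The ``hard part'' here is not an inequality or an explicit construction but a modeling point: one must be explicit that the attention interventions under consideration can only weakly enlarge consideration sets (an intervention that removed alternatives would obviously change behavior, but that is not what is meant by the intervention studied in Section~\ref{sec:attention}). Once that is pinned down the Fact is immediate — full consideration leaves no slack for an expanding intervention to exploit — and I would present it in precisely this way so as to sharpen the contrast with the ARUM-CS analysis, where the identical intervention yields the trivial welfare bounds.
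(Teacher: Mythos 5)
Your argument is correct and matches the paper's (implicit) reasoning exactly: the paper states this Fact without a formal proof, treating it as immediate because ARUM is the degenerate ARUM-CS model with $S(\eta)\equiv\{1,\ldots,K\}$, so an expanding attention intervention is a no-op on the consideration set and hence on the events in Definition~\ref{defcs}(ii). Your explicit note that the interventions in question only weakly enlarge consideration sets is a fair clarification of the modeling convention, not a gap.
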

This is no longer true in consideration set models and so I study the scope for such interventions. In ARUM-CS, I model an intervention by changing the mapping $S(\cdot)$, which maps latent attention factors to consideration sets. A $k$-attention intervention makes alternative $k$ always feasible. That is, $S$ changes to $\tilde{S}^k$, which is $S$ with $k$ added to the consideration set, i.e. $\tilde{S}^k(\eta) = S(\eta) \cup \{k\}$. Given an attention intervention, an admissible counterfactual is given by a counterfactual probability function $p^{\tilde{S}^k} : U \rightarrow \Delta^K$. This $p^{\tilde{S}^k}$ must satisfy the property that there is a distribution $\nu \in \mathcal{M}^{ARUM-CS}$ over $(\eps,\eta)$ such that for any $u \in U$ and $j$,
\[
p^{\tilde{S}^k}_j(u) = \Pr_{\nu} \left( \left\{ u_j + \eps_j > \max_{\ell \in \tilde{S}^k (\eta) : \ell \neq k} u_{\ell} + \eps_{\ell} \right\} \right),
\]
and that
\[
p_j(u) = \Pr_{\nu} \left( \left\{ u_j + \eps_j > \max_{\ell \in S(\eta) : \ell \neq k} u_{\ell} + \eps_{\ell} \right\} \cap \{ j \in S(\eta)\} \right).
\]
In words, there must be a distribution of tastes ($\eps$) and latent attention factors $(\eta)$ that generates the counterfactual probability $p^{\tilde{S}^k}$ while matching the primitive structural choice probability $p$. Let $\mathcal{P}^{Attention}$ be the set of such $p^{\tilde{S}^k}$ functions.

I study bounds on structural choices probabilities before and after a $k$-attention intervention. I analyze bounds on the quantity
\[
p^{\tilde{S}^k}_k(u) - p_k(u),
\]
where $p^{\tilde{S}^k}$ is an admissible counterfactual mapping. An obvious feature is that this difference is nonnegative, because adding $k$ to the consideration set cannot make its choice probability go down. I characterize the highest it can be below.
\begin{prop} \label{prop:boundedattention}
Assume $U$ has bounded utility differences and $U$ contains a $k$-maximal point. The identified set for the maximal change given a $k$-attention intervention is given by
\[
\cup_{p^{\tilde{S}^k} \in \mathcal{P}^{Attention}} \sup_{u \in U} \left\{ p^{\tilde{S}^k}_k(u) - p_k(u) \right\} = \left[0, 1 - \sup_{u \in U} p_k(u) \right].
\]
\end{prop}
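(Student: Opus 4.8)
The plan is to prove the two inclusions separately.

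\emph{Containment in $[0,1-\sup_{u\in U}p_k(u)]$.} Nonnegativity of $p^{\tilde S^k}_k(u)-p_k(u)$ is immediate, since enlarging a consideration set cannot lower the choice probability of $k$. For the upper bound fix an admissible $p^{\tilde S^k}\in\mathcal{P}^{Attention}$ with witness $\nu\in\mathcal{M}^{ARUM-CS}$ and abbreviate $S:=S(\eta)$. The defining displays give $p^{\tilde S^k}_k(u)=\Pr_\nu(u_k+\eps_k>\max_{\ell\in S\setminus\{k\}}u_\ell+\eps_\ell)$ and $p_k(u)=\Pr_\nu(\{u_k+\eps_k>\max_{\ell\in S\setminus\{k\}}u_\ell+\eps_\ell\}\cap\{k\in S\})$, so the two events differ only in whether $k\in S$ is imposed, whence
\[
p^{\tilde S^k}_k(u)-p_k(u)=\Pr_\nu\big(\{k\notin S\}\cap\{u_k+\eps_k>\max_{\ell\in S}u_\ell+\eps_\ell\}\big)\le\Pr_\nu(k\notin S)
\]
for every $u$. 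Since $p_k(u)\le\Pr_\nu(k\in S)$ for all $u$ (equivalently, $\Pr_\nu(k\in S)\in\Theta^{k,CS}$, for which $\sup_u p_k(u)$ is a lower bound by Lemma~\ref{lem:idsetequiv}), the right-hand side is at most $1-\sup_{u\in U}p_k(u)$, uniformly in $u$; taking the supremum over $u$ finishes this direction.

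\emph{The interval is attained.} It suffices to build, for each $q\in[\sup_{u\in U}p_k(u),1]$, a rationalizing $\nu_q$ whose induced counterfactual satisfies $\sup_{u\in U}\{p^{\tilde S^k}_k(u)-p_k(u)\}=1-q$; as $q$ ranges over this interval this hits every point of $[0,1-\sup_u p_k(u)]$. Let $u^*$ be the $k$-maximal point of $U$, so $p_k(u^*)=\sup_{u\in U}p_k(u)$ (Proposition~\ref{prop:id}(iii)), and pick $D<\infty$ with $|u^*_j-u^*_k|\le D$ for all $j$ (bounded utility differences). By Proposition~\ref{prop:obseq}(ii), $p$ has an ARUM rationalization $\mu_0$ with finite shocks; the event $A^*$ that $k$ is not the ARUM-maximizer at $u^*$ has $\mu_0$-probability $1-p_k(u^*)=1-\sup_u p_k(u)\ge 1-q$, so (appending an independent uniform variable to $\eta$ if needed to split events of a prescribed size) choose measurable $E\subseteq A^*$ with probability $1-q$. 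Define $\nu_q$ by setting $S(\eta)=\{1,\dots,K\}\setminus\{k\}$ on $E$ and $S(\eta)=\{1,\dots,K\}$ off $E$; keep $\eps_j$ for $j\neq k$ and keep $\eps_k$ off $E$ as under $\mu_0$; and on $E$ set $\eps_k:=\max_{\ell\neq k}\eps_\ell+D+1+\xi$ with $\xi$ an independent atomless perturbation on $(-1/2,1/2)$, present only to preclude ties.

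Three checks remain. First, $\nu_q$ rationalizes $p$: on $E\subseteq A^*$ alternative $k$ loses to some $j_0\neq k$ at $u^*$, and $k$-maximality ($u_k-u_{j_0}\le u^*_k-u^*_{j_0}$) forces $k$ to lose to $j_0$ at every $u\in U$, so deleting $k$ from the consideration set on $E$ leaves all choices unchanged; and the value of $\eps_k$ on $\{k\notin S(\eta)\}$ never enters the ARUM-CS probability, so its redefinition on $E$ is harmless, while off $E$ the model coincides with $\mu_0$. Second, $\Pr_{\nu_q}(k\notin S(\eta))=\Pr(E)=1-q$. Third, on $E$ at $u=u^*$ we have $u^*_k+\eps_k=u^*_k+\max_{\ell\neq k}\eps_\ell+D+1+\xi>u^*_\ell+\eps_\ell$ for every $\ell\neq k$ (using $\max_{\ell'\neq k}\eps_{\ell'}\ge\eps_\ell$, $D\ge u^*_\ell-u^*_k$, and $\xi>-1/2$), so $k$ beats all of $S(\eta)$ on $E$. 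Applying the $k$-attention intervention to $\nu_q$ and evaluating the difference formula above at $u=u^*$ then gives $p^{\tilde S^k}_k(u^*)-p_k(u^*)=\Pr_{\nu_q}(k\notin S(\eta))=1-q$, while the uniform bound gives $p^{\tilde S^k}_k(u)-p_k(u)\le\Pr_{\nu_q}(k\notin S(\eta))=1-q$ at every $u$; hence the supremum equals $1-q$, completing the reverse inclusion.

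The crux is the construction of $\nu_q$: the fact that one can delete $k$ from the consideration set on an event of prescribed probability without disturbing the rationalization of $p$ rests entirely on $k$-maximality of $u^*$, combined with the observation that $\eps_k$ is a free parameter on $\{k\notin S(\eta)\}$ and, because utility differences are bounded, can be pushed high enough to make $k$ win at $u^*$. The only genuinely technical points are extracting $E$ of exactly the right probability inside $A^*$ (handled by the auxiliary uniform) and ruling out utility ties in the counterfactual (handled by the perturbation $\xi$).
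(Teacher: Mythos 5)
Your proof is correct and follows essentially the same route as the paper's: the upper bound rests on the identity $p^{\tilde{S}^k}_k(u)-p_k(u)=\Pr_\nu(\{k\notin S(\eta)\}\cap\{k \text{ wins}\})\le \Pr_\nu(k\notin S(\eta))\le 1-\sup_{u}p_k(u)$, and attainment rests on the same two key observations the paper uses, namely that $k$-maximality of $u^*$ lets one remove $k$ from consideration exactly on (a subset of) the event where $k$ loses at $u^*$ without disturbing the rationalization, and that $\eps_k$ is a free parameter on $\{k\notin S(\eta)\}$ which, by bounded utility differences, can be pushed high enough for $k$ to win everywhere after the intervention. The only organizational difference is that you realize every interior value $1-q$ by carving out an event $E$ of exactly the right probability, whereas the paper attains the two endpoints and then invokes convexity of the identified set; your version is slightly more explicit but not a different argument.
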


The left and side is the definition of the identified set. The lower bound $0$ states that we cannot rule out that an attention intervention may have no change in choice probabilies. The upper bound states that the attention intervention may shift choice probability for alternative $k$ up by the amount $1 - \sup_{u \in U} p_k(u)$. Thus, the maximum scope for a $k$-attention intervention is higher when $\sup_{u \in U} p_k(u)$ is lower. Indeed, when this supremum is $1$, there is no scope for $k$-attention interventions to change the probability of choosing alternative $k$, because the data indicate $k$ is always considered.

When $U$ has unbounded utility differences, I show a qualitatively different result for any alternative that can be made extremely attractive.
\begin{prop} \label{prop:unboundedattention}
Assume $U \subseteq \mathbb{R}^K$ and that $k$ can be made extremely attractive, i.e. $k \in \mathcal{K}^{EA}$. For every admissible counterfactual $p^{\tilde{S}^k} \in \mathcal{P}^{Attention}$ we have
\[
\sup_{u \in U} \left\{  p^{\tilde{S}^k}_k(u) - p_k(u) \right\} =  1 - \sup_{u \in U} p_k(u).
\]
\end{prop}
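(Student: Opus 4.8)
The plan is to work directly with a rationalizing distribution and to reduce the whole difference to a statement about the \emph{fixed} (i.e.\ $u$-independent) event $\{k \notin S(\eta)\}$. First I would fix an admissible counterfactual $p^{\tilde{S}^k} \in \mathcal{P}^{Attention}$ together with a distribution $\nu \in \mathcal{M}^{ARUM-CS}$ that simultaneously generates $p$ (before) and $p^{\tilde{S}^k}$ (after), as in the definition of $\mathcal{P}^{Attention}$. Since $\tilde{S}^k(\eta) \setminus \{k\} = (S(\eta) \cup \{k\}) \setminus \{k\} = S(\eta) \setminus \{k\}$, the post-intervention event ``$k$ is chosen'' is $\{u_k + \eps_k > \max_{\ell \in S(\eta) \setminus \{k\}} u_\ell + \eps_\ell\}$ (with the maximum over the empty index set read as $-\infty$), while the pre-intervention event ``$k$ is chosen'' is the same event intersected with $\{k \in S(\eta)\}$. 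Because the first event contains the second,
\[
p^{\tilde{S}^k}_k(u) - p_k(u) = \Pr_{\nu}\Bigl(\bigl\{ u_k + \eps_k > \max_{\ell \in S(\eta) \setminus \{k\}} u_\ell + \eps_\ell \bigr\} \cap \{k \notin S(\eta)\}\Bigr)
\]
for every $u \in U$; note that on $\{k \notin S(\eta)\}$ the set $S(\eta) \setminus \{k\} = S(\eta)$ is nonempty, so no empty-maximum issue arises there.

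For the upper bound, the displayed probability is at most $\Pr_\nu(k \notin S(\eta)) = 1 - \Pr_\nu(k \in S(\eta))$, uniformly in $u$. Since $k \in \mathcal{K}^{EA}$, Proposition~\ref{prop:id}(i) together with the definition of $\Theta^{k,CS}$ (the identified set is the singleton $\{\sup_{u \in U} p_k(u)\}$) forces $\Pr_\nu(k \in S(\eta)) = \sup_{u \in U} p_k(u)$ for this particular $\nu$. Hence $\sup_{u \in U}\{p^{\tilde{S}^k}_k(u) - p_k(u)\} \le 1 - \sup_{u \in U} p_k(u)$.

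For the matching lower bound I would pick a sequence $u^n \in U$ with $\min_{j \neq k}(u^n_k - u^n_j) \to \infty$, which exists precisely because $k \in \mathcal{K}^{EA}$. Since $\eps$ is finite $\nu$-a.s.\ by Definition~\ref{defcs}(i), for $\nu$-a.e.\ realization with $k \notin S(\eta)$ we have $u^n_k + \eps_k - (u^n_\ell + \eps_\ell) \to \infty$ for each $\ell \neq k$ (as $u^n_k - u^n_\ell \ge \min_{j \neq k}(u^n_k - u^n_j)$ and $\eps_k - \eps_\ell$ is finite); since $S(\eta) \setminus \{k\}$ is a finite set, the indicator of $\{u^n_k + \eps_k > \max_{\ell \in S(\eta) \setminus \{k\}} u^n_\ell + \eps_\ell\}$ converges to $1$ on $\{k \notin S(\eta)\}$. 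Dominated convergence then gives that the displayed probability along $u^n$ converges to $\Pr_\nu(k \notin S(\eta)) = 1 - \sup_{u \in U} p_k(u)$; since $\sup_{u \in U}$ dominates every term of a convergent sequence, $\sup_{u \in U}\{p^{\tilde{S}^k}_k(u) - p_k(u)\} \ge 1 - \sup_{u \in U} p_k(u)$. Combining the two inequalities yields the claim.

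The step I expect to be the main obstacle is the passage to the limit along $u^n$: one must check measurability of the relevant indicator, its $\nu$-a.s.\ pointwise convergence to $1$ on $\{k \notin S(\eta)\}$, and handle the empty-index-set convention; and one must invoke the \emph{point} identification of $\Pr_\nu(k \in S(\eta))$ from Proposition~\ref{prop:id}(i) --- it is this singleton property, available only because $k$ can be made extremely attractive, that pins the limit at exactly $1 - \sup_{u \in U} p_k(u)$ rather than merely bounding it, and that makes the result qualitatively different from the bounded-difference case of Proposition~\ref{prop:boundedattention}.
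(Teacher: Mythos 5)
Your proposal is correct and follows essentially the same route as the paper: bound the difference above by $\Pr_\nu(k\notin S(\eta))$, pin that probability down as $1-\sup_{u\in U}p_k(u)$ via Proposition~\ref{prop:id}(i), and attain it in the limit along a sequence making $k$ extremely attractive, using finiteness of $\eps$. Your writing of the difference as the single probability $\Pr_\nu\bigl(\{u_k+\eps_k>\max_{\ell\in S(\eta)\setminus\{k\}}u_\ell+\eps_\ell\}\cap\{k\notin S(\eta)\}\bigr)$ is a slightly cleaner packaging than the paper's separate limits of $p^{\tilde{S}^k}_k(u^s)$ and $p_k(u^s)$, but the substance is the same.
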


This states that the maximum scope for $k$-attention interventions is uniquely identified for any alternative that can be made extremely attractive. Moreover, as long as $\sup_{u \in U} p_k(u) < 1$, we identify that there must be scope for $k$-attention interventions. This contrasts with the bounded utility difference case of Proposition~\ref{prop:boundedattention}, which states that it is always possible that an attention intervention does not change choice probabilities. Thus, there is a severe discontinuity between the unbounded and bounded cases, as in Corollary~\ref{cor:disc}. 

I omit formal analysis of ARUM-E and mention a key consideration that this paper is agnostic about. In ARUM-E, $\eps_k = -\infty$ may be interpreted as alternative $k$ either not being available, not considered, or arbitrarily unattractive. The specific interpretation determines the scope for the specific intervention. For example, if $k$ is always considered but sometimes not available because it is sold out, then an attention intervention may do nothing. Likewise, if $k$ is always considered but arbitrarily unattractive to some people (e.g. $k$ is a bag of peanuts and some people have a severe allergy), then an attention intervention will do nothing.

\section{Welfare} \label{sec:welfare}
I now study welfare analysis using the different ARUM-family models. Paralleling the counterfactual analysis of Section~\ref{sec:counter}, I analyze two welfare questions. The first is how welfare changes when utility indices change from $u$ to $\tilde{u}$, e.g. $u = -p$ is negative price and some prices change. I show that welfare analysis does not depend on which ARUM family model the analyst uses, at least when $U$ is convex. The second question is how welfare changes given an attention intervention that makes each individual always consider an alternative. I establish the striking result that for ARUM-CS, it is not possible to obtain meaningful bounds on welfare changes for interventions that make an alternative always be considered.

Throughout, I work with an average indirect utility notion for welfare changes. For $\mu \in \mathcal{M}^{ARUM}$, define the average indirect utility $V^{ARUM}_{\mu} : \mathbb{R}^K \rightarrow \mathbb{R}$ as
\[
V_{\mu}^{ARUM}(u) = \E_{\mu} \left[ \max_{k} \{ u_k + \eps_k \} - \max_{k} \{ \eps_k \} \right].
\]
Note this function is defined over all of $\mathbb{R}^K$ not just $U$. This $V$ is sometimes called the social surplus function. Subtraction is standard and ensures that this expectation exists and is finite (\cite{sorensen2022mcfadden}). Define ARUM-E identically, except the distribution $\mu^E$ can allow $\eps_k = -\infty$ for some $k$.\footnote{Recall for ARUM-E, a unique maximizer exists $\mu^E$-a.s., so $\max_k \{\eps_k \}$ is finite $\mu^E$-a.s.} For ARUM-CS, define
\[
V_{\nu}^{ARUM-CS}(u) = \E_{\nu} \left[ \max_{k \in S(\eta)} \{ u_k + \eps_k \} - \max_{k} \{ \eps_k \} \right],
\]
where $\nu$ is a distribution over $(\eps,\eta)$.

\subsection{Utility Index Changes}

The following envelope theorem is very useful.
\begin{lemma} \label{lem:wdz}
Assume $p$ is consistent with ARUM, ARUM-E, and ARUM-CS. For any $u \in U$, $\mu \in \mathcal{M}^{ARUM}$, $\mu^E \in \mathcal{M}^{ARUM-E}$, and $\nu \in \mathcal{M}^{ARUM-CS}$, it follows that:
\[
p(u) = \nabla_u V_{\mu}^{ARUM}(u) = \nabla_u V_{\mu^{E}}^{ARUM-E}(u) = \nabla_u V_{\nu}^{ARUM-CS}(u).
\]
\end{lemma}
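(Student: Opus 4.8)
The plan is to prove the Williams--Daly--Zachary identity for each of the three models separately by differentiating the social surplus function under the expectation, exploiting that the integrand is $1$-Lipschitz in $u$ — so its difference quotients admit the constant, hence integrable, envelope $1$ — and is differentiable in $u_k$ at the point $u$ off a null set, so that dominated convergence applies along sequences $t_n \to 0$. I would first carry this out for ARUM, the classical case. Fix $\mu \in \mathcal{M}^{ARUM}$ and $u \in U$, and set $g(u,\eps) = \max_k\{u_k + \eps_k\}$, so $V_\mu^{ARUM}(u) = \E_\mu[g(u,\eps)] - \E_\mu[\max_k \eps_k]$ with the last term independent of $u$. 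For each fixed $\eps$, $t \mapsto g(u + t e_k,\eps) = \max\{u_k + t + \eps_k,\ \max_{j\neq k}(u_j+\eps_j)\}$ is convex and piecewise linear in $t$ with at most one kink; it is differentiable at $t = 0$ with derivative $\mathbf{1}\{u_k + \eps_k > \max_{j\neq k}(u_j+\eps_j)\}$ except when $u_k + \eps_k = \max_{j\neq k}(u_j+\eps_j)$. That tie event is $\mu$-null: the events $\{u_k+\eps_k > \max_{j\neq k}(u_j+\eps_j)\}$ are disjoint over $k$ with union $\{$the maximizer of $u+\eps$ is unique$\}$, so $\sum_k p_k(u) = 1$ forces a unique maximizer $\mu$-a.s. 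Since $|g(u+te_k,\eps) - g(u,\eps)| \le |t|$ for every $\eps$, dominated convergence gives $\partial_{u_k} V_\mu^{ARUM}(u) = \Pr_\mu(k\text{ is the argmax}) = p_k(u)$, and doing this for each $k$ yields $\nabla_u V_\mu^{ARUM}(u) = p(u)$.

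The two remaining cases are the same argument with extra bookkeeping. For ARUM-E, work with $g^E(u,\eps) = \max_k\{u_k+\eps_k\}$ over extended-real $\eps$. On the probability-one event that some $\eps_j$ is finite, $g^E$ is finite and still $1$-Lipschitz in $u$; on $\{\eps_k = -\infty\}$ it does not depend on $u_k$, so its $u_k$-derivative there is $0$, which is consistent with the WDZ formula because on that event $k$ is never the maximizer. Hence $\partial_{u_k} g^E = \mathbf{1}\{k\text{ is the argmax}\}$ holds $\mu^E$-a.s. — using again that $p(u) \in \Delta^K$ rules out ties among the finite coordinates — and dominated convergence delivers $\nabla_u V_{\mu^E}^{ARUM-E}(u) = p(u)$. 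For ARUM-CS, apply the argument to $h(u,\eps,\eta) = \max_{k \in S(\eta)}\{u_k + \eps_k\}$, a maximum of finitely many finite affine functions of $u$ for each fixed $\eta$, hence $1$-Lipschitz in $u$. When $k \notin S(\eta)$, $h$ is constant in $u_k$; when $k \in S(\eta)$ it is differentiable in $u_k$ at $u$ with derivative $\mathbf{1}\{u_k + \eps_k > \max_{j \in S(\eta),\, j \neq k}(u_j+\eps_j)\}$ off a $\nu$-null tie set, the nullity again following from $\sum_k p_k(u) = 1$. So $\partial_{u_k} h = \mathbf{1}\{k \text{ is chosen}\}$ $\nu$-a.s., and dominated convergence gives $\nabla_u V_\nu^{ARUM-CS}(u) = p(u)$.

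I expect the main obstacle to be the careful justification of interchanging differentiation and integration: one must verify almost-sure differentiability of the integrand at the specific point $u$ (where uniqueness of the maximizer is guaranteed), not at nearby points, and must supply the uniform $L^1$ envelope from the $1$-Lipschitz property in order to invoke dominated convergence along sequences $t_n \to 0$. The handling of the $-\infty$ coordinates in ARUM-E and of the coordinates outside $S(\eta)$ in ARUM-CS is then routine, once one observes that in each case those coordinates contribute a zero partial derivative and are exactly the ones that are never chosen, so the identity $\partial_{u_k}(\text{maximized utility}) = \mathbf{1}\{k \text{ chosen}\}$ is preserved almost surely.
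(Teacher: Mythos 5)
Your proof is correct, but it takes a genuinely different route from the paper's. The paper proves the lemma through convex duality: it encodes the discrete maximization as a linear program over the simplex (restricted to the considered/finite coordinates), identifies the value as a convex conjugate $f^*(\cdot,\eps)$, invokes Rockafellar's Theorems 23.5 and 25.1 to convert uniqueness of the maximizer into differentiability of $f^*$ with gradient equal to the argmax vertex, and then uses Bertsekas' result on interchanging the subdifferential with the expectation. You instead differentiate under the integral sign directly: the integrand is a maximum of finitely many affine functions of $u$, hence $1$-Lipschitz, so difference quotients have the constant envelope $1$, and they converge a.s.\ to the relevant indicator because the no-ties condition ($\sum_k p_k(u)=1$, i.e.\ Definition~1(ii)) makes the kink set $\mu$-null at the specific point $u$; dominated convergence then yields the Williams--Daly--Zachary identity. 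Your argument is more elementary and self-contained; the paper's buys the general machinery (subdifferential characterizations, automatic convexity bookkeeping) that it reuses elsewhere. Two small points deserve tightening. First, writing $V_\mu^{ARUM}(u)=\E_\mu[g(u,\eps)]-\E_\mu[\max_k\eps_k]$ presumes $\E_\mu[\max_k\eps_k]$ is finite, which the model does not guarantee (this is exactly why the paper defines $V$ via a single expectation of the difference); the fix is cosmetic, since the $\max_k\eps_k$ term cancels in every difference quotient, so you should simply run the dominated-convergence argument on the combined integrand $g(u+te_k,\eps)-g(u,\eps)$. Second, you establish the existence of each partial derivative $\partial_{u_k}V(u)$, but the lemma asserts $\nabla_u V(u)$, i.e.\ full differentiability; this follows either by noting that $V$ is convex so that existence of all partials implies differentiability (Rockafellar, Theorem~25.2), or by running your dominated-convergence argument along arbitrary directions, using that a finite maximum of affine functions is totally differentiable wherever its maximizer is unique. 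With those two one-line patches, the proof is complete.
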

Differentiability of each $V$ is established in the proof using the fact that the maximizer is unique with probability $1$. I note that Lemma~\ref{lem:wdz} holds for $U$ discrete or even a singleton $\{ u \}$. The equality for ARUM is well-known under different conditions, e.g. \cite{mcfadden1981econometric}.\footnote{For ARUM, \cite{mcfadden1981econometric} and \cite{shi2018estimating} impose a density assumption that rules out utility ties for any $u \in \mathbb{R}^K$. Here we assume ties occur with probability $0$ only for $u \in \mathcal{U}$. See \cite{sorensen2022mcfadden} for a version allowing utility ties (which formally is more general than ARUM defined here), in which case the value function may no longer be differentiable.} Lemma 1 in \cite{allen2019identification} covers ARUM-E under alternative high-level conditions. The ARUM-CS case does not have clear precedent.

I consider the welfare change in moving from $u$ to $\tilde{u}$. For ARUM this is given by
\[
\Delta^{ARUM}(\tilde{u},u,\mu) = V_{\mu}^{ARUM}(\tilde{u}) - V_{\mu}^{ARUM}(u),
\]
and I similarly define $\Delta^{ARUM-E}$ and $\Delta^{ARUM-CS}$. Recall that as defined in Section~\ref{sec:iddisting}, $\mathcal{M}^{ARUM}$ is the set of distributions that rationalize $p$ according to ARUM, and similarly for $\mathcal{M}^{ARUM-E}$ and $\mathcal{M}^{ARUM-CS}$. The identified set for welfare changes in ARUM is given by
\[
\cup_{\mu \in \mathcal{M}^{ARUM}} \Delta^{ARUM}(\tilde{u},u,\mu),
\]
and defined analogously for ARUM-E and ARUM-CS.

I use the envelope theorem above to integrate probabilities and identify differences in average indirect utility.

\begin{prop} \label{prop:welfarepoint}
Assume $p$ is consistent with ARUM, ARUM-E, and ARUM-CS. Assume $U \subseteq \mathbb{R}^K$. Let $u, \tilde{u} \in U$, and assume that for any $\alpha \in [0,1]$, $\alpha u + (1 - \alpha) \tilde{u} \in U$. It follows that
\begin{align*}
\cup_{\mu \in \mathcal{M}^{ARUM}} \Delta^{ARUM}(\tilde{u},u,\mu) & =
\cup_{\mu \in \mathcal{M}^{ARUM-E}} \Delta^{ARUM-E}(\tilde{u},u,\mu) \\
& = 
\cup_{\nu \in \mathcal{M}^{ARUM-CS}} \Delta^{ARUM-CS}(\tilde{u},u,\nu) \\
& = \int_0^1 p(t \tilde{u} + (1 - t) u) \cdot (\tilde{u} - u) dt.
\end{align*}
\end{prop}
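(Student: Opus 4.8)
The plan is to prove the proposition by combining the envelope theorem of Lemma~\ref{lem:wdz} with a one-dimensional fundamental-theorem-of-calculus argument, and then to note that the right-hand side does not depend on the rationalizing distribution, which forces all three identified sets to collapse to that single value.

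First I would fix any one of the three models, say ARUM-CS with a rationalizing $\nu \in \mathcal{M}^{ARUM-CS}$, and consider the scalar function $g(t) = V_{\nu}^{ARUM-CS}(t\tilde u + (1-t)u)$ for $t \in [0,1]$. The convexity hypothesis on $U$ guarantees that the segment $t\tilde u + (1-t)u$ lies in $U$, so Lemma~\ref{lem:wdz} applies at each point of the segment and gives $\nabla_u V_{\nu}^{ARUM-CS}(t\tilde u + (1-t)u) = p(t\tilde u + (1-t)u)$. By the chain rule, $g$ is differentiable on $[0,1]$ with $g'(t) = p(t\tilde u + (1-t)u)\cdot(\tilde u - u)$. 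Then $\Delta^{ARUM-CS}(\tilde u, u, \nu) = g(1) - g(0) = \int_0^1 g'(t)\,dt = \int_0^1 p(t\tilde u + (1-t)u)\cdot(\tilde u - u)\,dt$, which is exactly the claimed integral. The same computation works verbatim for ARUM and ARUM-E using the corresponding pieces of Lemma~\ref{lem:wdz}.

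The key observation is that the resulting value is a functional of the observed choice probabilities $p$ restricted to the segment, and is therefore the same for every rationalizing distribution within a given model and across the three models. Hence each of the three unions $\cup_{\mu}\Delta^{ARUM}$, $\cup_{\mu}\Delta^{ARUM-E}$, $\cup_{\nu}\Delta^{ARUM-CS}$ is the singleton $\{\int_0^1 p(t\tilde u + (1-t)u)\cdot(\tilde u - u)\,dt\}$; in particular these sets are all nonempty, since by hypothesis $p$ is simultaneously consistent with all three models so each union is taken over a nonempty index set. This establishes the chain of equalities.

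The main obstacle I anticipate is the justification of $g(1) - g(0) = \int_0^1 g'(t)\,dt$: one needs $g$ to be not merely differentiable but absolutely continuous (or at least that $g'$ is integrable and the fundamental theorem of calculus applies). Since $V$ is convex in $u$ — it is an expectation of maxima of affine functions — its restriction $g$ to a line segment is convex, hence locally Lipschitz on the open interval and continuous on the closed interval, so it is absolutely continuous and the fundamental theorem of calculus holds; the integrand $p(\cdot)\cdot(\tilde u - u)$ is bounded since choice probabilities lie in $\Delta^K$, so the integral is finite. I would also need a brief remark that the partial derivatives delivered pointwise by Lemma~\ref{lem:wdz} are genuine gradients (total differentiability), which the lemma's proof already supplies via uniqueness of the maximizer; given that, the chain rule step is routine. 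A secondary point worth a sentence is that convexity of $U$ is used only to keep the integration path inside the domain where $p$, and hence the envelope identity, is available.
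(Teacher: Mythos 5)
Your proof is correct and follows essentially the same route as the paper's: define the scalar function along the segment, apply Lemma~\ref{lem:wdz} and the chain rule to get $g'(t)=p(t\tilde u+(1-t)u)\cdot(\tilde u-u)$, integrate, and observe that the resulting value depends only on $p$ and hence is common to every rationalizing distribution and every model. Your added remarks on absolute continuity via convexity and on the unions being singletons are just slightly more explicit versions of steps the paper treats as immediate.
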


This result provides a constructive point identification result for welfare changes that applies to each model. Point identification for ARUM and ARUM-E is known under slightly different conditions, e.g. results in \cite{small1981applied} and Theorem 4 in \cite{allen2019identification}. The novelty here is that (i) we can also identify welfare changes for ARUM-CS, and (ii) all models deliver the same welfare implications for utility changes.

\subsection{Attention Changes}
I now analyze how welfare changes with an attention intervention like in Section~\ref{sec:attention}. In ARUM, attention increases do not change welfare because individuals pay attention to all alternatives. In ARUM-E, the role of attention interventions depends on the interpretation of the event $\eps_k = -\infty$. It could be due to stock out, limited consideration, or just the alternative being arbitrarily undesirable. The interpretation determines the role of attention interventions. I thus focus on ARUM-CS, where this ambiguity does not arise because $\eps_k$ must be finite and this paper interprets $S(\eta)$ as the set of items that are considered.

Define now
\[
V_{\nu}^{ARUM-CS}(u,S) =  \E_{\nu} \left[ \max_{k \in S(\eta)} \{ u_k + \eps_k \} - \max_{k} \{ \eps_k \} \right].
\]
Here, $S$ is added as an argument because I consider changes that make an alternative available. That is, I consider a $k$-attention intervention that appends $k$ to the consideration set via $\tilde{S}^k(\eta) = S(\eta) \cup \{ k \}$. The welfare change given a $k$-attention intervention is
\[
V_{\nu}^{ARUM-CS}(u,\tilde{S}^k) - V_{\nu}^{ARUM-CS}(u,S).
\]
The identified set for this difference is defined across distributions $\nu$ that rationalize $p$ with the model ARUM-CS. I characterize it as follows.
\begin{prop} \label{prop:attentionwelfare}
Assume $p$ is consistent with ARUM-CS. Let $U \subseteq \mathbb{R}^K$, $u \in U$, and let $k$ be a specific alternative.
\begin{enumerate}[(i)]
    \item If $\sup_{u \in U} p_k(u) = 1$, then the identified set for a $k$-attention intervention is
    \[
    \cup_{\nu \in \mathcal{M}^{ARUM-CS}} \left\{V_{\nu}^{ARUM-CS}(u,\tilde{S}^k) - V_{\nu}^{ARUM-CS}(u,S) \right\} = 0.
    \]
    \item If $\sup_{u \in U} p_k(u) < 1$ and $U$ is bounded and contains a $k$-maximal point, or alternative $U$ is unbounded and $k$ can be made extremely attractive, then the identified set for a $k$-attention intervention is
    \[
    \cup_{\nu \in \mathcal{M}^{ARUM-CS}} \left\{V_{\nu}^{ARUM-CS}(u,\tilde{S}^k) - V_{\nu}^{ARUM-CS}(u,S) \right\} = [0,\infty).
    \]
\end{enumerate}
\end{prop}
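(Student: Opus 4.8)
I would begin by rewriting, for any $\nu\in\mathcal M^{ARUM-CS}$, the quantity $V_{\nu}^{ARUM-CS}(u,\tilde S^k)-V_{\nu}^{ARUM-CS}(u,S)$ pointwise. Since $\tilde S^k(\eta)=S(\eta)\cup\{k\}$, the $\max_{j}\{\eps_j\}$ terms cancel and, writing $M_S:=\max_{j\in S(\eta)}\{u_j+\eps_j\}$ (finite $\nu$-a.s. by Definition~\ref{defcs}(i)),
\[
V_{\nu}^{ARUM-CS}(u,\tilde S^k)-V_{\nu}^{ARUM-CS}(u,S)=\E_{\nu}\!\left[\left(u_k+\eps_k-M_S\right)^{+}\mathbf 1\{k\notin S(\eta)\}\right],
\]
because on $\{k\in S(\eta)\}$ the two consideration sets agree and the integrand is $0$, while on $\{k\notin S(\eta)\}$ the gap is $\max\{u_k+\eps_k,\,M_S\}-M_S=(u_k+\eps_k-M_S)^{+}$. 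This integrand is nonnegative, so every identified set appearing in the statement is contained in $[0,\infty)$.

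\textbf{Part (i).} If $\sup_{u\in U}p_k(u)=1$, then by Lemma~\ref{lem:idsetequiv} every $c\in\Theta^{k,CS}$ satisfies $c\ge 1$, hence $c=1$; so each rationalizing $\nu$ has $\Pr_{\nu}(k\in S(\eta))=1$, whence $\tilde S^k(\eta)=S(\eta)$ $\nu$-a.s. and the difference of value functions is $0$. As $p$ is ARUM-CS-consistent the identified set is nonempty, so it equals $\{0\}$.

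\textbf{Part (ii).} In view of the first paragraph it remains to put $0$ and every arbitrarily large number into the identified set. The engine is that the conditional law of $\eps_k$ given $A:=\{k\notin S(\eta)\}$ is unconstrained by $p$: in Definition~\ref{defcs}(ii), $\eps_k$ enters the choice probabilities only through events intersected with $\{k\in S(\eta)\}=A^{c}$ (for $j\neq k$ it appears in $\max_{\ell\in S(\eta):\ell\neq j}$ only when $k\in S(\eta)$; for $j=k$ the relevant event already lies in $A^{c}$). So I would (a) first produce one rationalizing $\nu$ with $q:=\Pr_{\nu}(A)>0$: in the bounded case this is Corollary~\ref{ref:cornontrivial}(ii), which applies since $\sup_u p_k(u)<1$, $U$ has bounded utility differences, and $U$ contains a $k$-maximal point; in the unbounded case Proposition~\ref{prop:id}(i) forces $\Pr_{\nu}(k\in S(\eta))=\sup_u p_k(u)<1$ for \emph{every} rationalizing $\nu$, so any one works. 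Then (b) for a target $w\ge 0$ I would build $\nu'_w$ from $\nu$ by keeping $\eta$, $S$, and $(\eps_j)_{j\neq k}$ unchanged, keeping $\eps_k$ on $A^{c}$, and on $A$ resetting $\eps_k:=M_S-u_k+w/q$ — a finite, $(\eta,(\eps_j)_{j\neq k})$-measurable quantity on $A$. Since $\eps_k$ was altered only on $A$, none of the events defining $p_j(u)$ for $u\in U$ changes, so $\nu'_w\in\mathcal M^{ARUM-CS}$; and substituting $u_k+\eps_k-M_S=w/q$ into the identity above yields a gap of $(w/q)\Pr_{\nu'_w}(A)=(w/q)q=w$. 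Since $w=0$ is likewise attainable (e.g. reset $\eps_k$ to any sufficiently negative value on $A$), the identified set contains $[0,\infty)$, and combined with the first paragraph it equals $[0,\infty)$.

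\textbf{Main difficulty.} The conceptual crux is step (b): recognizing that the data say nothing about the magnitude of $\eps_k$ on the event that $k$ is not considered, which is exactly why the welfare increase is completely unbounded. The remaining items — measurability of the reassignment, the cancellation/decomposition of the value functions, and ensuring the two $V$'s stay finite (which can be arranged by also taking $\eps$ bounded off $A$, available from the constructions underlying Proposition~\ref{prop:id}) — are routine bookkeeping. The one genuine prerequisite beyond that is the existence of a rationalization with $\Pr_{\nu}(k\notin S(\eta))>0$, which is supplied by Proposition~\ref{prop:id} and Corollary~\ref{ref:cornontrivial}.
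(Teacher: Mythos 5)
Your proof is correct, and for part (ii) it takes a cleaner route than the paper's. Both arguments rest on the same engine — the law of $\eps_k$ conditional on $A=\{k\notin S(\eta)\}$ is unrestricted by the data, and a rationalization with $\Pr(A)>0$ exists by Corollary~\ref{ref:cornontrivial} (bounded case) or Proposition~\ref{prop:id}(i) (unbounded case) — but they finish differently. The paper only shows that the supremum of the welfare gap over $\mathcal{M}^{ARUM-CS}$ is $+\infty$ and that $0$ is attained, and then fills in the interval by a convexity argument: mixtures of rationalizing measures rationalize, and $V$ is affine in the measure. Your identity
\[
V_{\nu}(u,\tilde S^k)-V_{\nu}(u,S)=\E_{\nu}\!\left[\bigl(u_k+\eps_k-M_S\bigr)^{+}\mathbf 1\{k\notin S(\eta)\}\right]
\]
lets you hit every target $w\ge 0$ exactly by resetting $\eps_k:=M_S-u_k+w/q$ on $A$, so you never need the mixture step; it also delivers the containment in $[0,\infty)$ and part (i) in one line. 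Your reassignment is legitimate for the same reason the paper's is: on $A$ the variable $\eps_k$ enters none of the events in Definition~\ref{defcs}(ii), and the reset value is finite and measurable, so $\nu'_w$ still rationalizes $p$ and satisfies $-\infty<\eps_k<\infty$ a.s. The one point you wave at rather than prove — that both value functions remain finite so the $\max_j\{\eps_j\}$ terms genuinely cancel — is handled no more carefully in the paper (which simply asserts integrability for any measure in $\mathcal{M}^{ARUM-CS}$), so I do not count it against you; your remark that one can keep $\eps$ bounded off $A$ is the right fix if one wanted to be pedantic.
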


Part (i) states that when alternative $k$ takes probability arbitrarily close to $1$, then an attention intervention does not alter welfare. Part (ii) is striking, since it states that the only restriction is that higher attention cannot decrease utility. The precise values of $p$ are irrelevant when $\sup_{u \in U} p_k(u) < 1$. This indicates a fundamental limit of welfare analysis in this setting. This limit arises because when an alternative is not considered, its utility shock may be arbitrarily high. Thus, making it be considered may lead to an arbitrarily high increase in indirect utility.

\section{Unknown Utility Indices} \label{sec:unknownutility}

The previous analysis focused on the case in which utility indices are known. In practice, it is common to consider covariates that vary and alter utility in a way that is not known \textit{a priori}. This section extends a snapshot of the previous results to this setting.

I describe the modified setup. Instead of the utility index $u_k$ for alternative $k$, replace it with $v_k(x_k)$, where $v_k$ is an unknown function and $x_k$ is a vector of observable regressors that alter the desirability of alternative $k$. Such regressors could be different across alternatives (such as alternative characteristics) or common (such as demographic variables). This setup allows $v_k$ to be linear but does not require this. All covariates are collected in $x = (x'_1, \ldots, x'_K)'$. Each $x_k \in \mathbb{R}^{d_k}$ and the overall vector satisfies $x \in \mathbb{R}^d$ where $d = \sum_{k = 1}^K d_k$. Utility indices are collected as $v(x) = (v_1(x_1), \ldots, v_K(x_K))'$. The structural probabilities $p : U \rightarrow \Delta^K$ are not directly identified here. Instead, I consider the mapping $\tilde{p}(x) = p(v_1(x_1), \ldots, v_K(x_K))$. Then $\tilde{p}_k(x)$ is the probability of choosing alternative $k$ given covariates $x$. I assume $\tilde{p} : \mathcal{X} \rightarrow \Delta^K$ is known over $\mathcal{X}$. One can interpret $\mathcal{X}$ as the observable range over which covariates vary.

\subsection{Distinguishability and Identification of Utility Indices}
I study when it is possible to distinguish between models when utility indices are not known in advance, and the related question of identification of $v$ for each model.
\begin{defn}
Choice probabilities $\tilde{p}$ are consistent with ARUM with utility function $v : \mathcal{X} \rightarrow \mathbb{R}^K$ if there exists a function $p : v(\mathcal{X}) \rightarrow \Delta^K$ such that $p$ is consistent with ARUM and $\tilde{p}(x) = p(v(x))$ for each $x \in \mathcal{X}$.
\end{defn}
The set $v(\mathcal{X})$ is the range of $v$. Consistency with ARUM-E and ARUM-CS is defined analogously.
\begin{prop} \label{prop:covsameu}
Assume $\mathcal{X} \subseteq \mathbb{R}^d$ and that $\tilde{p}$ is consistent with either ARUM, ARUM-E, or ARUM-CS with a utility $v$ that is bounded over $\mathcal{X}$. It follows that $\tilde{p}$ is consistent with ARUM, ARUM-E, and ARUM-CS with the same utility $v$.
\end{prop}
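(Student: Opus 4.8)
The plan is to reduce the claim to Proposition~\ref{prop:obseq}(ii) by observing that the relevant domain of utility indices is automatically bounded. Fix the utility function $v$ from the hypothesis and let $U := v(\mathcal{X}) \subseteq \mathbb{R}^K$ be its range. Since $v$ is bounded over $\mathcal{X}$, there are finite constants $\underline{a}_k, \overline{a}_k$ with $v_k(x_k) \in [\underline{a}_k, \overline{a}_k]$ for all $x \in \mathcal{X}$ and each $k$; hence $\sup_{u \in U} |u_k - u_j| \leq \overline{a}_k - \underline{a}_j < \infty$ for any $j,k$, i.e. $U$ has bounded utility differences.

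Next I would unpack the definition of consistency. Suppose $\tilde{p}$ is consistent with, say, ARUM-CS with utility $v$ (the two other starting cases are handled identically). By definition there is a function $p : U \rightarrow \Delta^K$ that is consistent with ARUM-CS as a structural choice probability on the domain $U$ and that satisfies $\tilde{p}(x) = p(v(x))$ for every $x \in \mathcal{X}$. Because $U$ has bounded utility differences, Proposition~\ref{prop:obseq}(ii) applies to this $U$, so $p$ is simultaneously consistent with ARUM, ARUM-E, and ARUM-CS. For each of the three models, the identity $\tilde{p}(x) = p(v(x))$ on $\mathcal{X}$ is exactly what is required by the definition of consistency with a utility function, so $\tilde{p}$ is consistent with ARUM, ARUM-E, and ARUM-CS, all with the same $v$.

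The only point requiring care is that Proposition~\ref{prop:obseq}(ii) is stated for an exogenously given domain $U \subseteq \mathbb{R}^K$, whereas here $U = v(\mathcal{X})$ is pinned down by the fixed candidate utility; but once $v$ is fixed this is simply a particular bounded subset of $\mathbb{R}^K$, so there is no circularity. Accordingly, I do not expect a substantive obstacle — the argument is essentially a bookkeeping reduction to the known observational-equivalence result, with the one genuine observation being that boundedness of $v$ transfers to boundedness of utility differences on its range.
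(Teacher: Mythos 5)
Your proposal is correct and follows exactly the paper's own argument: take $U = v(\mathcal{X})$, note that boundedness of $v$ gives bounded utility differences on $U$, and invoke Proposition~\ref{prop:obseq}(ii) to upgrade consistency of $p$ on $U$ to all three models simultaneously, with the same $v$. No substantive differences from the paper's proof.
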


An important case is when $\mathcal{X}$ is compact and we restrict attention to continuous utility functions $v$. This means $v$ is bounded over $\mathcal{X}$ and so any such $v$ that rationalizes one model can rationalize any other. I formalize this with some additional notation. To that end, let $\mathcal{U}$ denote a set of candidate utility functions from $\mathcal{X}$ to $\mathbb{R}^K$, which forms the parameter space. For the model ARUM, define
\[
\Theta^{ARUM}_{\mathcal{U}} = \{ v \in \mathcal{U} \mid \tilde{p} \text{ is consistent with ARUM with utility } v \}.
\]
Define this identified set similarly for each model.
\begin{cor} \label{cor:covariate}
Assume $\mathcal{X} \subseteq \mathbb{R}^d$ is compact and $\mathcal{U}$ consists of continuous functions.
\begin{enumerate}[(i)]
    \item (Identified Sets are Equal.) \[
    \Theta^{ARUM}_{\mathcal{U}} = \Theta^{ARUM-E}_{\mathcal{U}} = \Theta^{ARUM-CS}_{\mathcal{U}}.
    \]
    \item (Point Identification.) If one of the sets $\Theta^{ARUM}_{\mathcal{U}}$, $\Theta^{ARUM-E}_{\mathcal{U}}$ and $\Theta^{ARUM-CS}_{\mathcal{U}}$ is a singleton, then all are singletons.
    \item (Observational Equivalence.) Assume $\tilde{p}$ is consistent with one of the models ARUM, ARUM-E, or ARUM-CS for some $v \in \mathcal{U}$. It follows that $\tilde{p}$ is consistent with all of the models ARUM, ARUM-E, and ARUM-CS for the same $v \in \mathcal{U}$.
\end{enumerate}
\end{cor}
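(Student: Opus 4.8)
The plan is to obtain the corollary directly from Proposition~\ref{prop:covsameu}, the only new ingredient being the elementary fact that a continuous function on a compact set is bounded. All of the substantive work has already been done: Proposition~\ref{prop:covsameu} is the engine, and the corollary is essentially bookkeeping that repackages it in terms of the parameter space $\mathcal{U}$.

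First I would record the key observation. Since $\mathcal{X}$ is compact and every $v \in \mathcal{U}$ is continuous, each candidate utility $v \in \mathcal{U}$ is bounded over $\mathcal{X}$. Hence, for any $v \in \mathcal{U}$ such that $\tilde{p}$ is consistent with one of the three models with utility $v$, the hypothesis of Proposition~\ref{prop:covsameu} is met, and we may conclude $\tilde{p}$ is consistent with all three models with that same $v$.

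For part (i) I would prove equality of the three identified sets by a symmetric chain of inclusions. Take $v \in \Theta^{ARUM}_{\mathcal{U}}$, so that $\tilde{p}$ is consistent with ARUM with utility $v$; since $v$ is bounded over $\mathcal{X}$, Proposition~\ref{prop:covsameu} gives consistency with ARUM-E and ARUM-CS under the same $v$, i.e. $v \in \Theta^{ARUM-E}_{\mathcal{U}}$ and $v \in \Theta^{ARUM-CS}_{\mathcal{U}}$. Because Proposition~\ref{prop:covsameu} does not privilege which of the three models one starts from, the same argument begun at $\Theta^{ARUM-E}_{\mathcal{U}}$ or $\Theta^{ARUM-CS}_{\mathcal{U}}$ yields the remaining inclusions, so the three sets coincide. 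Part (ii) is then immediate, since equal sets are simultaneously singletons or not. Part (iii) is just (i) stated in words: if $\tilde{p}$ is consistent with one of the models for some $v \in \mathcal{U}$, then $v$ lies in the corresponding identified set, which by (i) equals each of the others, so $\tilde{p}$ is consistent with all three models for that $v$.

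There is no real obstacle at this level; the one point requiring a moment's care is to invoke compactness of $\mathcal{X}$ solely through the implication ``continuous on a compact set $\Rightarrow$ bounded,'' which is precisely the boundedness hypothesis that Proposition~\ref{prop:covsameu} (and, underneath it, the bounded-utility-difference part of Proposition~\ref{prop:obseq} applied to $U = v(\mathcal{X})$) requires. Any genuine difficulty is thus located upstream in those results, not in the corollary itself.
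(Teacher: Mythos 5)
Your proposal is correct and matches the paper's own (implicit) argument exactly: the paper also derives the corollary as an immediate consequence of Proposition~\ref{prop:covsameu}, with the only new input being that a continuous function on a compact set is bounded, so that each $v \in \mathcal{U}$ satisfies the boundedness hypothesis and consistency with one model transfers to all three for the same $v$. Parts (ii) and (iii) then follow from the set equality in part (i) just as you describe.
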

Part (i) states the identified sets are the same here. Part (ii) states point identification in one model is equivalent to point identification in all. For example, \cite{allen2019identification} identify utility indices in a generalization of ARUM-E allowing compact support for regressors and requiring continuous utility indices. Part (iii) states the models are observationally equivalent here. This indicates that counterfactual analysis involving covariate varition also coincides.

Without compactness of $\mathcal{X}$, ARUM is no longer observationally equivalent to the other models, and the identified sets can differ. Lack of observational equivalence is true for the simple example $\mathcal{U}$ equal to the identity mapping (so that covariates are scalar for each alternative), and $\mathcal{X} = \mathbb{R}^K$. Then from Corollary~\ref{ref:cor1} we conclude ARUM is not observationally equivalent to ARUM-E or ARUM-CS. To see why identified sets can differ when $\mathcal{X}$ is not compact, consider a setting in which $\sup_{x \in \mathcal{X}} \tilde{p}_k(x) < 1$ for each $k$. ARUM requires $v$ with bounded utility differences, whereas ARUM-E and ARUM-CS do not.

\subsection{Identification of Marginal Consideration Probabilities}
I now turn to identification of marginal consideration probabilities $\Pr(\eps_{k} > -\infty)$ and $\Pr(k \in S(\eta))$ when utility indices are not known in advance.
\begin{prop} \label{prop:marginalcovariate}
Assume $\mathcal{X} \subseteq \mathbb{R}^{d}$ is the Cartesian product of $d$ compact sets, each in $\mathbb{R}$, and that $\mathcal{U}$ consists of continuous functions. Assume $\tilde{p}$ is known over $\mathcal{X}$ and is consistent with ARUM-E for some $v \in \mathcal{U}$. The identified sets for $\Pr(\eps_{k} > -\infty)$ and $\Pr(k \in S(\eta))$ are equal, and are given by
    \[
    \left[\sup_{x \in \mathcal{X}} \tilde{p}_k(x) , 1 \right].
    \]
\end{prop}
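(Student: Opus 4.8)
The plan is to reduce the statement, for each candidate utility $v$, to Proposition~\ref{prop:id}(iii) applied with the endogenous domain $U = v(\mathcal{X})$, and then to take a union over all $v$ that rationalize $\tilde p$.

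\textbf{Step 1 (the range $v(\mathcal{X})$ is a product of $K$ compact subsets of $\mathbb{R}$).} I would first observe that since $\mathcal{X} = \prod_{j=1}^d K_j$ with each $K_j \subseteq \mathbb{R}$ compact, and since the blocks $x_1,\dots,x_K$ partition the coordinates $1,\dots,d$, each $x_k$ ranges over the compact set $\mathcal{X}_k := \prod_{j \in \mathrm{block}(k)} K_j$ and the blocks vary independently. Because $v_k$ depends only on $x_k$ and is continuous, $v_k(\mathcal{X}_k)$ is compact in $\mathbb{R}$, and $v(\mathcal{X}) = \prod_{k=1}^K v_k(\mathcal{X}_k)$. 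Such a set has bounded utility differences (each coordinate set is bounded, so $\sup_{u \in v(\mathcal{X})} |u_k - u_j| < \infty$) and contains a $k$-maximal point for every $k$ — namely the point with $k$-th coordinate $\max v_k(\mathcal{X}_k)$ and $j$-th coordinate $\min v_j(\mathcal{X}_j)$ for $j \neq k$.

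\textbf{Step 2 (apply Proposition~\ref{prop:id}(iii) for each fixed $v$).} Fix any $v \in \Theta^{ARUM-E}_{\mathcal{U}}$ and let $p : v(\mathcal{X}) \to \Delta^K$ be the induced function, so $p_k(v(x)) = \tilde p_k(x)$. By Step 1, $U = v(\mathcal{X})$ satisfies the hypotheses of Proposition~\ref{prop:id}(iii), giving $\Theta^{k,E}(v(\mathcal{X})) = \Theta^{k,CS}(v(\mathcal{X})) = [\sup_{u \in v(\mathcal{X})} p_k(u), 1]$. Since $v$ maps $\mathcal{X}$ onto $v(\mathcal{X})$, we have $\sup_{u \in v(\mathcal{X})} p_k(u) = \sup_{x \in \mathcal{X}} \tilde p_k(x)$, a quantity that does not depend on the particular $v$. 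Hence each consistent $v$ contributes exactly the interval $[\sup_{x \in \mathcal{X}} \tilde p_k(x), 1]$ to the marginal identified set, for both the ARUM-E and the ARUM-CS versions.

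\textbf{Step 3 (union over $v$).} By hypothesis $\Theta^{ARUM-E}_{\mathcal{U}} \neq \emptyset$, and by Corollary~\ref{cor:covariate}(i) (equivalently Proposition~\ref{prop:covsameu}) it coincides with $\Theta^{ARUM-CS}_{\mathcal{U}}$, so the identified set for $\Pr(\eps_k > -\infty)$ and that for $\Pr(k \in S(\eta))$ are unions of the same family of intervals indexed by the same nonempty set of $v$'s; by Step 2 every member of the family equals $[\sup_{x\in\mathcal{X}}\tilde p_k(x),1]$, so both identified sets equal that interval and are therefore equal. The step I expect to require the most care is Step 1: checking that the box structure of $\mathcal{X}$ pushes forward through $v$ to make $v(\mathcal{X})$ a product of compact subsets of $\mathbb{R}$, since it is precisely this product structure (not just compactness) that licenses invoking part (iii) of Proposition~\ref{prop:id}, with its $k$-maximal point, rather than only the weaker parts (i)–(ii).
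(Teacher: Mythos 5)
Your proposal is correct and follows essentially the same route as the paper: both arguments observe that $v(\mathcal{X})$ is a Cartesian product of compact subsets of $\mathbb{R}$ (hence has bounded utility differences and a $k$-maximal point realized at an $x^*$ where $v_k$ is maximized and each $v_j$, $j\neq k$, is minimized), and then invoke Proposition~\ref{prop:id}(iii) together with $\sup_{u \in v(\mathcal{X})} p_k(u) = \sup_{x \in \mathcal{X}}\tilde p_k(x)$. Your Steps 1 and 3 merely spell out details (the block structure of the covariates and the union over consistent $v$'s) that the paper's proof leaves implicit.
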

This result provides conditions under which identification of utility indices is separate from identification of these marginal consideration set probabilities: sharp bounds depend only on $\sup_{x \in \mathcal{X}} \tilde{p}_k(x)$.

\section{Discussion} \label{sec:disc}

How does limited consideration alter analysis relative to full-consideration models? The answer depends on the baseline model of choice and the extension to allow limited consideration. This paper asks the question with the baseline additive random utility models and two extensions that allow exogenous limited consideration. The main motivation for studying these models is that despite its limitations, the additive random utility model is widely used, and the extensions provide a natural ``update'' to the classic model.

The broad finding of this paper is that with the exception of attention interventions -- which clearly require a model of limited attention -- the classic and updated models deliver \textit{identical} answers to certain questions in empirically relevant settings. The classic nonparametric model can thus be interpreted as a consideration set model. The sharpest contrast is measuring welfare with an attention intervention that makes one alternative be considered. The consideration set model only concludes the welfare contrast is somewhere in the trivial set $[0,\infty)$, while the full attention model states it must be zero. 

I close with further remarks and qualifications.

This paper adds ARUM to the list of consideration set models. The union of the empirical content of models that have been used to study consideration sets is now large. I illustrate by contrasting two other models. For general random utility (with nonseparable unobservables as in \cite{block1960random}), regularity states that if an alternative is added to the choice set, the probability of choosing an existing alternative cannot go \textit{up}. The model studied in \cite{manzini2014stochastic} is a random utility model and requires regularity, while the model of \cite{cattaneo2020random} requires violations of regularity to identify limited consideration. Comparing these two models shows that whether regularity is or is not a hallmark of limited consideration is thus model specific.

Overall, it seems that the ``right'' model is setting-question specific. ARUM is a reasonable candidate if the analyst wishes to do classic analysis that is robust to exogenous consideration sets. An important caveat is that this paper shows this only for the \textit{nonparametric} model studied here. This paper is silent on comparison of a parsimonious ARUM model with a parsimonious consideration set model. In order for ARUM to mimic a consideration set model, it must have fat tails for the distribution of shocks. Loosely, if $\eps_k = -\infty$ with positive probability, ARUM can only mimic this with a distribution of shocks that allows very negative values. Traditional econometric analysis rules this out. For example, common parametric choices like logit and normal shocks have very thin tails, and may be ill-suited when consideration sets may be relevant.

\bibliographystyle{plainnat}
\bibliography{ref}

\begin{appendices}	

\counterwithin{theorem}{section}
\counterwithin{prop}{section}
\counterwithin{lemma}{section}
\counterwithin{cor}{section}
\counterwithin{assm}{section}
\counterwithin{defn}{section}
\counterwithin{remark}{section}

\section{Proof of Main Results}

\subsection{Proofs for Section~\ref{sec:obs}}

\begin{proof}[Proof of Proposition~\ref{prop:obseq}.]
First I prove part (i). Suppose first that $p$ is consistent with ARUM-E with distribution $\mu$ over $\eps$. Define $S(\eps)$ as the subset of $\{1, \ldots, K \}$ for which $\eps_k > -\infty$. Let $\tilde{\eps} = \eps$ except for any $k$ and event $\eps_k = -\infty$, $\tilde{\eps}_k = 0$ instead so that $\tilde{\eps}$ is finite. This $\tilde{\eps}$ modification is done because ARUM-CS requires finite-valued shocks. Let $\nu$ be the distribution over $(\tilde{\eps},\eps)$ induced by $\mu$. By the definition of ARUM-E, for each $\mu \in \mathcal{M}^{ARUM-E}$, some $\eps_k$ is finite $\mu$-a.s. It is immediate that
\[
\Pr_\mu \left(u_k + \eps_k > \max_{j \neq k } u_j + \eps_j \right) = \Pr_{\nu} \left(u_k + \tilde{\eps}_k > \max_{j \in S(\eps) : j \neq k} u_j + \tilde{\eps}_j \right) = p_k(u).
\]
Since these equalities hold for each $k$ and $u \in U$, $p$ is consistent with ARUM-CS. Note further that $\Pr_{\mu}(\eps_k > -\infty) = \Pr_{\nu}(k \in S(\eps))$.

Now instead suppose $p$ is consistent with ARUM-CS with distribution $\pi$ over $(\eps,\eta)$. Construct a new distribution $\rho$ over $\eps$ by taking the marginal distribution of $\eps$ according to $\nu$ and modifying it by setting $\eps_k = -\infty$ if $k \not\in S(\eta)$. We have
\[
\Pr_\pi \left(u_k + \eps_k > \max_{j \neq k} u_j + \eps_j \right) = \Pr_{\rho} \left(u_k + \eps_k > \max_{j \in S(\eta) : j \neq k} u_j + \eps_j \right) = p_k(u).
\]
Thus, $p$ is consistent with ARUM-E. Moreover, note that $\Pr_{\pi}(k \in S(\eta)) = \Pr_{\nu}(\eps_k > -\infty)$.

I now prove part (ii). Note from (i) one need only show that ARUM is observationally equivalent to ARUM-CS. It is clear that ARUM is a special case of ARUM-CS (by setting $\eta$ degenerate and $S(\eta) = \{1, \ldots, K\}$). Assume $p$ is consistent with ARUM-CS. To complete the proof it remains to show $p$ is consistent with ARUM. Let $\tau \in \mathcal{M}^{ARUM-CS}$ rationalize $p$ for ARUM-CS. For each $(\eps,\eta)$ such that $S(\eta)$ is nonempty, construct $\tilde{\eps}$ so that $\tilde{\eps}_k = \eps_k$ if $k \in S(\eta)$. For $k \not\in S(\eta)$, let $\tilde{\eps}_k$ be negative enough so that $k$ is never chosen for any $u$.\footnote{The following choice suffices:
\[
\tilde{\eps}_k =  \underline{u}(\eps,\eta) := \inf_{u \in U} \min_{k \in S(\eta)} \{ u_k - u_1 + \eps_k \} - \sup_{u \in U} \{ u_k - u_1 \} - 1.
\]
We see $\tilde{\eps}_k$ is finite because $U$ has bounded utility differences bounded.}

We obtain that provided $S(\eta)$ is nonempty, for any $k$ we have
\begin{align*}
\left\{ k \in S(\eta) \right\} & \cap \left\{ u_k  + \eps_k > \max_{j \in S(\eta) : j \neq k} \{ u_j + \eps_j \} \right\} \\
& = \left\{ k \in S(\eta) \right\} \cap \left\{ u_k - u_1 + \eps_k > \max_{j \in S(\eta) : j \neq k} \{ u_j - u_1 + \eps_j \} \right\} \\ 
& =  \left\{ k \in S(\eta) \right\} \cap \left\{ u_k - u_1 + \tilde{\eps}_k > \max_{j \neq k} \{ u_j - u_1 + \tilde{\eps}_j \} \right\} \\
& = \left\{ u_k - u_1 + \tilde{\eps}_k > \max_{j \neq k} \{ u_j - u_1 + \tilde{\eps}_j \} \right\} \\
& = \left\{ u_k + \tilde{\eps}_k > \max_{j \neq k} \{ u_j + \tilde{\eps}_j \} \right\}.
\end{align*}
The first equality subtracts a finite $u_1$ from both sides of an inequality. The second equality replaces $\eps$ with $\tilde{\eps}$. Note that when $S(\eta)$ is nonempty, this does not alter the events since the only candidates for a maximizer are elements of $S(\eta)$. For the third equality, the higher line is clearly a subset of the lower; the lower line is a subset of the higher line because when $k$ is a maximizer, $k \in S(\eta)$. The final equality adds back $u_1$.

We conclude that since $S(\eta)$ is nonempty $\tau$-a.s., for arbitrary $k$ and $u \in U$ we have
\begin{align*}
\Pr_{\tau} \bigg(u_k + \tilde{\eps}_k > \max_{j \neq k} \{ u_j + \tilde{\eps}_j \} \bigg)  & = \Pr_{\tau} \bigg( \left\{ k \in S(\eta) \right\} \cap \left\{ u_k + \eps_k > \max_{j \in S(\eta) : j \neq k} \{ u_j + \eps_j \} \right\} \bigg) \\
& = p_k(u).
\end{align*}
The distribution $\tau$ over $\tilde{\eps}$ rationalizes $p$, and hence $p$ is consistent with ARUM. This completes the proof.
\end{proof}

\begin{proof}[Proof of Lemma~\ref{lem:idsetequiv}]
The equality
\[
\Theta^{k,E} = \Theta^{k,CS}
\]
is established in the proof of Proposition~\ref{prop:obseq}(i).

To establish convexity, consider $c, \tilde{c} \in \Theta^{k,E}$. Let $\mu_c$ and $\mu_{\tilde{c}}$ be distributions over $\eps$ that rationalize $p$ according to ARUM-E, and that satisfy $\Pr_{\mu_c} (\eps_k > -\infty) = c$ and similarly for $\tilde{c}$. Let $\alpha \in [0,1]$. Construct a mixture distribution $\mu_{\alpha c + (1 - \alpha) \tilde{c}}$. Here, $\alpha$ is the weight on $\mu_c$ and $(1 - \alpha)$ is the weight on $\mu_{\tilde{c}}$. This mixture distribution satisfies for any $k$ and $u \in U$ that
\begin{align*}
p_k(u) & =\alpha \Pr_{\mu_{c}} \left(u_k + \eps_k > \max_{j \neq k} \{ u_j + \eps_j \}\right) + (1 - \alpha)  \Pr_{\mu_{\tilde{c}}} \left(u_k + \eps_k > \max_{j \neq k} \{ u_j + \eps_j \}\right) \\
    & = \Pr_{\mu_{\alpha c + (1 - \alpha) \tilde{c}}} \left(u_k + \eps_k > \max_{j \neq k} \{ u_j + \eps_j \}\right).
\end{align*}
Thus, the mixture rationalizes $p$ according to ARUM-E. Finally, we see that $\Pr_{\mu_{\alpha c + (1 - \alpha) \tilde{c}}} (\eps_k > -\infty)  = \alpha c + (1 - \alpha) \tilde{c}$. This establishes convexity of $\Theta^{k,E}$ and hence $\Theta^{k,CS}$ because these sets are equal.

To establish the lower bound, let $\nu$ be a distribution over $\eps$ that rationalizes $p$ according to ARUM-E. We have 
\[
p_k(u) = \Pr_{\nu} \left(u_k + \eps_k > \max_{j \neq k} u_j + \eps_j \right) \leq \Pr_{\nu} \left(\eps_k > -\infty \right).
\]
Thus, $\sup_{u \in U} p_k(u) \leq \Pr_{\nu} (\eps_k > -\infty)$.
\end{proof}

\begin{proof}[Proof of Proposition~\ref{prop:id}.]

I first prove part (i). Let $u^1, u^2, \ldots$ be a sequence of vectors in $U$ that satisfies
\[
\min_{j \neq k} \{ u^s_k - u^s_j \} \geq s
\]
for each $s$. Such a sequence exists because $k \in \mathcal{K}^{EA}$. For any $\nu$ that rationalizes $p$ according to ARUM-E, we have
\begin{align*}
\liminf_{s \rightarrow \infty} p_k(u^s) = & \liminf_{s \rightarrow \infty} \Pr_{\nu} \left(\eps_k > \max_{j \neq k} \{ u^s_j - u^s_k + \eps_j \} \right) \\
\geq & \liminf_{s \rightarrow \infty} \Pr_{\nu} \left(\eps_k > \max_{j \neq k} \{ -s + \eps_j \} \right) \\
= & \Pr_{\nu} \left(\cup_{s = 1}^{\infty} \Big\{ \eps_k > \max_{j \neq k} \{ -s + \eps_j \} \Big\} \right) \\
= & \Pr_{\nu} ( \eps_k > -\infty).
\end{align*}
The first line uses the fact that $\nu$ rationalizes $p$. The second line follows because $u^s_j - u^k \leq -s$. The third line uses the fact that the events
\[
B^s = \left\{ \eps_k > \max_{j \neq k} \{ -s + \eps_j \right\}
\]
satisfy $B^s \subseteq B^t$ for $s \leq t$, and so we can use the continuity from below property of probabilities (\cite{billingsley2017probability}, Theorem 2.1(i)). The final equality uses the fact that the events
\[
\cup_{s = 1}^{\infty} \Big\{ \eps_k > \max_{j \neq k} \{ -s + \eps_j \} \Big\}
\]
and
\[
\{ \eps_k > -\infty \}
\]
are the same. 

Lemma~\ref{lem:idsetequiv} yields $\limsup_{s \rightarrow \infty} p_k(u^s) \leq \Pr_{\nu}(\eps_k > -\infty)$, and so $\lim_{s \rightarrow \infty} p_k(u^s) = \Pr_{\nu} (\eps_k > -\infty)$. Again using the lower bound in Lemma~\ref{lem:idsetequiv}, we have $\sup_{u \in U} p_k(u) \leq \Pr_{\nu}(\eps_k > -\infty)$. Because $\sup_{u \in U} p_k(u) \geq \lim_{s \rightarrow \infty} p_k(u^s)$, we conclude $\sup_{u \in U} p_k(u) = \lim_{s \rightarrow \infty} p_k(u^s) $. We thus have $\Theta^{k,E} = \Theta^{k,CS} = \sup_{u \in U} p_k(u)$.

Part (ii) follows from Proposition~\ref{prop:obseq}(ii). Indeed, since $U$ has bounded utility differences and we assume $p$ is consistent with ARUM-E, then it must also be consistent with ARUM. But this implies we cannot refute that $\Pr(\eps_k > -\infty) = 1$ for ARUM-E.

I now prove part (iii) in two steps. Step 1 proves  $\sup_{u \in U} p_k(u) \in \Theta^{k,E}$ and that $\sup_{u \in U} p_k(u)$ is the (greatest) lower bound. Step 2 proves $1 \in \Theta^{k,E}$. Lemma~ \ref{lem:idsetequiv} then yields $\Theta^{k,E} = \Theta^{k,CS} = [\sup_{u \in U} p_k(u), 1]$ by convexity.

\textit{Step 1.} Let $u^* \in U$ be a $k$-maximal point. I work with $\Theta^{k,E}$ for convenience. Consider the event that $k$ is not chosen when utility indices are given by $u^*$, i.e.
\[
A  = \left\{u^*_k + \eps_k < \max_{j \neq k} \{ u^*_j + \eps_j \} \right\},
\]
which is equivalent to
\[
\eps_k < \max_{j \neq k} \{ u^*_j - u^*_k + \eps_j \}.
\]
Since $u^*$ is k-maximal, we obtain that this event implies for any $w \in U$ that
\[
\eps_k < \max_{j \neq k} \{ w_j - w_k + \eps_j \}.
\]
In words, if $k$ is not chosen at $u^*$ (for fixed $\eps$), then it is not chosen for any $w$. Importantly, we can replace $\eps_k$ with any smaller value and the inequalities above are preserved. We now exploit this key feature.

To that end, consider an arbitrary distribution $\mu$ over $\eps$ that rationalizes $p$ according to ARUM-E. Construct $\tilde{\eps}$ that agrees with $\eps$ on $A^c$, i.e. the complement of $A$. For the complement of this event $A$, let $\tilde{\eps}_j = \eps_j$ for $j \neq k$, and $\tilde{\eps}_k = -\infty$. The only way this modification can alter the events characterizing choices is by making alternative $k$ go from being chosen to not chosen. But as discussed above by analyzing events like A, we see that this alteration (of $\eps$ to $\tilde{\eps}$) does not change events in which $k$ is not chosen at $u^*$. Thus, the distribution $\mu$ over $\tilde{\eps}$ rationalizes $p$. 

We have
\begin{align*}
\Pr_{\mu} (\tilde{\eps}_k = -\infty) = & \Pr_{\mu}( u^*_k + \eps_k  < \max_{j \neq k} \{ u^*_j + \eps_j \}) \\
= & 1 - p_k(u^*) + \Pr_{\mu}(u^*_k + \eps_k = \max_{j \neq k} \{ u^*_j + \eps_j \} ) \\
= & 1 - p_k(u^*) \\
= & 1 - \sup_{u \in U} p_k(u).
\end{align*}
The third equality uses the fact that by the definition of ARUM-E consistency, utility ties at the max occur with probability 0. The final equality comes from the assumption that $u^*$ is $k$-maximal, and uses the facts that $p_k$ is weakly increasing in $u_k$, and weakly decreasing in $u_j$ for $j \neq k$. We obtain $\Pr_{\mu}(\tilde{\eps}_k > -\infty) = \sup_{u \in U} p_k(u)$ and hence $\sup_{u \in U} p_k(u) \in \Theta^{E,k}$. Part (i) establishes $\sup_{u \in U} p_k(u)$ provides a lower bound for $\Theta^{E,k}$, and hence we obtain $\sup_{u \in U} p_k(u)$ is in fact the greatest lower bound.
 
\textit{Step 2.} Since $U$ has bounded utility differences, ARUM and ARUM-E are observationally equivalent from Proposition~\ref{prop:obseq}(ii). ARUM corresponds to $\Pr(\eps_k > -\infty) = 1$ for each $k$. Thus, $1 \in \Theta^{k,E}$, completing the proof of part (ii).
\end{proof}

\begin{proof}[Proof of Corollary~\ref{ref:cor1}]
$((i) \implies (ii))$. If $p$ is consistent with ARUM, then there exists a $\mu \in \mathcal{M}^{ARUM-E}$ that rationalizes $p$ according to ARUM-E, and that places probability $1$ on the event $\{ \eps_k > -\infty\}$. Thus, $1 \in \Theta^{k,E}$. Proposition~\ref{prop:id}(i) states $\Theta^{k,E} = \sup_{u \in U} p_k(u)$ and so $\sup_{u \in U} p_k(u) = 1$.

$((ii) \implies (i))$. Proposition~\ref{prop:id}(i) yields $\Theta^{k,E} = 1$. Thus, for any $\mu \in \mathcal{M}^{ARUM-E}$ we have $\Pr_{\mu} (\eps_k > -\infty \text{ for each } k) = 1$. Thus $\mu \in \mathcal{M}^{ARUM}$ and so $p$ is consistent with ARUM.
\end{proof}

\begin{proof}[Proof of Corollary~\ref{ref:cornontrivial}.]
This is immediate from Proposition~\ref{prop:id}(iii).
\end{proof}

\subsection{Proofs for Section~\ref{sec:counter}}

\begin{proof}[Proof of Proposition~\ref{prop:utilityintervention}]
Part (i) follows from observational equivalence as formalized in Proposition~\ref{prop:obseq}(ii), and the fact that the set $U \cup \{ u^c\}$ is bounded.

Similarly, Proposition~\ref{prop:obseq}(i) implies for general $U \cup \{ u^c \}$ that
\[
\Theta^C_{ARUM-E} = \Theta^C_{ARUM-CS}.
\]
\end{proof}

\begin{proof}[Proof of Proposition~\ref{prop:boundedattention}]

Proposition~\ref{prop:id}(iii) states that the identified set for $\Pr(k \in S(\eta))$ is $\Theta^{k,E} = [p_k(u^*), 1]$, where $u^*$ is a $k$-maximal element of $U$. We leverage this to characterize the lower and upper endpoints of $\cup_{p^{\tilde{S}^k} \in \mathcal{P}^{Attention}} \sup_{u \in U} \left\{ p^{\tilde{S}^k}_k(u) - p_k(u) \right\} $.

\textit{Lower Bound:}
Since $\Theta^{k,E}$ has highest element 1 (corresponding to $k \in S(\eta)$ with probability 1), we immediately obtain that $\cup_{p^{\tilde{S}^k} \in \mathcal{P}^{Attention}} \sup_{u \in U} \left\{ p^{\tilde{S}^k}_k(u) - p_k(u) \right\}$ contains 0. That is, if $k \in S(\eta)$ with probability $1$ then a $k$-intervention does not change anything since $S(\eta) = \tilde{S}^k(\eta)$ with probability $1$. We conclude also that since a $k$-attention intervention adds $k$ to the choice set, we must have that the probability of choosing $k$ weakly goes up because we do not allow utility ties. This yields that $\cup_{p^{\tilde{S}^k} \in \mathcal{P}^{Attention}} \sup_{u \in U} \left\{ p^{\tilde{S}^k}_k(u) - p_k(u) \right\}$ has no element less than $0$ and hence $0$ is its greatest lower bound.

\textit{Upper Bound:}
Since $\Theta^{k,E}$ has lowest element $p_k(u^*)$, there exists a distribution over $(\eps,\eta)$ that rationalizes $p$ according to ARUM-CS and that satisfies $\Pr(k \in S(\eta)) = p_k(u^*)$. Modify the distribution of $\eps$ as follows: on the event $k \not\in S(\eta)$, replace $\eps_k$ with $\tilde{\eps}_k$ that satisfies $u_k + \tilde{\eps}_k > u_j + \eps_j$ for each $u \in U$ and $j \neq k$. Since $U$ has bounded utility differences this is possible by setting $\tilde{\eps}_k$ high enough. Let $\tilde{\eps}$ be a vector with $\tilde{\eps}_j = \eps_j$ for $j \neq k$. Then this new distribution $(\eta,\tilde{\eps})$ rationalizes $p$ according to ARUM-CS because it does not affect choice probabilities. We have that with this new distribution, on the event $k \not\in S(\eta)$, alternative $k$ has highest utility and thus will be chosen under a $k$ attention intervention when it was not without the attention intervention. The probability of $k \not\in S(\eta)$ is maximized at $1 - p_k(u^*)$ and so we obtain that the probability of choosing $k$ can go up by that amount. This upper bound $1 - p_k(u^*)$ cannot be improved because it maximizes the probability $k \not\in S(\eta)$, and attention interventions can only modify choices on the event $k \not\in S(\eta)$.

\textit{Convexity:} The argument for the bound for the upper endpoint can be reproduced for any element of $\Theta^{k,E}$. Since $\Theta^{k,E}$ is convex we obtain that $\cup_{p^{\tilde{S}^k} \in \mathcal{P}^{Attention}} \sup_{u \in U} \left\{ p^{\tilde{S}^k}_k(u) - p_k(u) \right\}$ is as well.
\end{proof}

\begin{proof}[Proof of Proposition~\ref{prop:unboundedattention}]

A $k$-attention intervention cannot decrease the probability $k$ is chosen. Moreover, an attention intervention can only alter choices on the event $k \not\in S(\eta)$. From this we deduce that for any $u$ and $p^{\tilde{S}^k} \in \mathcal{P}^{Attention}$,
\begin{equation} \label{eq:attentionubound}
0 \leq p^{\tilde{S}^k}_k(u) - p_k(u) \leq \Pr(k \not\in S(\eta) = 1 - \sup_{u \in U} p_k(u),
\end{equation}
where the final equality comes from Proposition~\ref{prop:id}.

Since $k \in \mathcal{K}^{EA}$, there is a sequence $u^1, u^2, \ldots$ of points in $U$ that satisfies
\[
\min_{j \neq k} \{ u^s_k - u^s_j \} \geq s.
\]
For any $p^{\tilde{S}^k} \in \mathcal{P}^{Attention}$ and associated $\nu \in \mathcal{M}^{ARUM-CS}$, and any $u \in U$:
\begin{align*}
\Pr_{\nu} ( u_k + \eps_k > \max_{\ell \neq k} u_j + \eps_j ) & \leq \Pr_{\nu} ( u_k + \eps_k > \max_{\tilde{S}^k(\eta) : \ell \neq k} u_j + \eps_j ) \\
& = p^{\tilde{S}^k}_k(u).
\end{align*}
We obtain that
\[
\liminf_{s \rightarrow \infty} \Pr_{\nu} ( u^s_k + \eps_k > \max_{\ell \neq k} u^s_j + \eps_j ) = 1
\]
because each $\nu \in \mathcal{M}^{ARUM-CS}$ is a distribution over $(\eps,\eta)$ with $\eps$ finite. We conclude
\[
\lim_{s \rightarrow \infty} p^{\tilde{S}^k}_k(u^s) = 1.
\]
From the proof of Proposition~\ref{prop:id}(i),
\[
\lim_{s \rightarrow \infty} p_k(u^s) = 1 - \Pr(k \not\in S(\eta) 
\]
so that
\[
\lim_{s \rightarrow \infty} p^{\tilde{S}^k}_k(u^s) - p_k(u^s) = 1 - (1 - \Pr(k \not\in S(\eta)) = \Pr(k \not\in S(\eta).
\]
With (\ref{eq:attentionubound}) we conclude
\[
\sup_{u \in U} \left\{  p^{\tilde{S}^k}_k(u) - p_k(u) \right\} =  1 - \sup_{u \in U} p_k(u).
\]
\end{proof}

\subsection{Proofs for Section~\ref{sec:welfare}}

\begin{proof} [Proof of Lemma~\ref{lem:wdz}]

We first prove the result for ARUM-E and then prove the result for the other cases. Overall, the proof uses convex-analytic arguments to connect a unique maximizer to a differentiable value function.

Let $\Delta^{K - 1} = \{ y \in \mathbb{R}^K \mid \sum_{k = 1}^K y_k = 1, y_k \geq 0 \}$ be the probability simplex in $K$ dimensions. Write
\begin{equation} \label{eq:convex}
\sup_{y \in \mathbb{R}^K} \sum_{k = 1}^k y_k u_k - f(y,\eps)
\end{equation}
where
\[
f(y,\eps) = \begin{cases} -\sum_{k = 1}^k y_k \eps_k & \text{ if } y \in \Delta^{K - 1} \\
    \infty & \text{ otherwise }.
    \end{cases}
\]
We define multiplication as $a (-\infty) = -\infty$ for any scalar $a > 0$, and recall $-\infty + a = -\infty$ for $a \neq \infty$. Thus, the term $\sum_{k = 1}^k y_k \eps_k$ is finite only when $y_k > 0$ implies $\eps_k \neq -\infty$. Note here we invoke that $\eps_k < \infty$ to yield $f(y,\eps) > -\infty$ for any $y$. We observe that $f(\cdot, \eps) : \mathbb{R}^{K} \rightarrow \mathbb{R} \cup \{\infty\}$ is finite over a closed convex set, namely the set of $y \in \Delta^{K - 1}$ satisfying $y_k = 0$ for any $k$ with $\eps_k = -\infty$. Moreover, $f(\cdot, \eps)$ is affine in $y$ over the region where it is finite. We conclude $f(\cdot, \eps)$ is lower-semicontinuous in $u$, i.e. $\{ y \in \mathbb{R}^K \mid f(y,\eps) \leq \alpha \}$ is closed for any finite $\alpha$.

Let $\overline{u} \in U$. Define the event in which the utility-maximizing choice is unique,
\[
A(\overline{u}) = \left\{ \text{For some } k, \overline{u}_k + \eps_k > \max_{j \neq k } \overline{u}_j + \eps_j \right\}.
\]
Note that on the event $A(\overline{u})$, $\max_{k} \{ \eps_k \} > -\infty$. This means that $f(y,\eps)$ is finite for some $y$ on the event $A(\overline{u})$, and thus $f(\cdot, \eps)$ is a proper convex function that is lower-semicontinuous. Define the convex conjugate of $f(\cdot,\eps)$ as
\[
f^*(u,\eps) = \sup_{y \in \mathbb{R}^K} \sum_{k = 1} y_k u_k - f(y,\eps).
\]

We have verified the conditions of Theorem 23.5 in \cite{rockafellar2015convex}. The equivalence of (b) and (a*) in that theorem yields
\[
\argmax_{y \in \mathbb{R}^K} \sum_{k = 1}^k y_k \overline{u}_k - f(y,\eps) = \partial_u f^*(\overline{u}, \eps),
\]
where $\partial_u f^*(\overline{u},\eps)$ is the subdifferential of $f^*(\cdot, \eps)$ at $\overline{u}$.\footnote{See pp 214-215 in \cite{rockafellar2015convex} for the definition of subgradients and the subdifferential.} On the event $A(\overline{u})$, the argmax set is a singleton, and takes the form $(0, \ldots, 0, 1, 0, \ldots 0)$ where the $1$ denotes which alternative is chosen. Since there is a unique maximizer, $\partial_u f^*(\overline{u}, \eps)$ is a singleton. From Theorem 25.1 in \cite{rockafellar2015convex} we obtain on the event $A(\overline{u})$ that $f^*(u, \eps)$ is differentiable in $u$ at $\overline{u}$ and satisfies
\[
\argmax_{y \in \mathbb{R}^K} \sum_{k = 1}^k y_k \overline{u}_k - f(y,\eps) = \nabla_u f^*(\overline{u},\eps).
\]
For any $\mu^E \in \mathcal{M}^{ARUM-E}$, $A(\overline{u})$ occurs $\mu^{E}$-almost surely. Taking expectations, we conclude that
\[
p(\overline{u}) = \E_{\mu^E} \left[ \argmax_{y \in \mathbb{R}^K} \sum_{k = 1}^k y_k \overline{u}_k - f(y,\eps) \right] = \E_{\mu^E}[ \nabla_u f^*(\overline{u},\eps) ].
\]
Recall
\[
V_{\mu^E}^{ARUM-E} (\overline{u}) = \E_{\mu^E} \left[ \max_{k} \{ u_k + \eps_k\} - \max_k \{ \eps_k \} \right] = \E_{\mu^E} \left[ f^*(\overline{u},\eps) - \max_k \{ \eps_k \} \right].
\]
We next need to exchange differentiation and integration. To that end, note on the event $A(\overline{u})$ that
\[
\left|f^*(\overline{u},\eps) - \max_k \{ \eps_k \} \right| \leq \max_{k} |\overline{u}_k|, 
\]
so in particular the left hand side is integrable with respect to any $\mu^E \in \mathcal{M}^{ARUM-E}$. Recalling that convex conjugates like $f^*(\cdot, \eps)$ are always convex, we obtain from Proposition 2.2 in \cite{bertsekas1973stochastic} that
\[
\partial V_{\mu^E}^{ARUM-E} (\overline{u}) = \E_{\mu^E} [ \nabla_u f^*(\overline{u},\eps) ].
\]
Recall $\partial$ denotes the subdifferential. Note that while subdifferentials may be multi-valued in general, the right hand side is a vector and so the left hand side is too. Since $V_{\mu^E}^{ARUM-E}$ is convex, Theorem 25.1 in \cite{rockafellar2015convex} yields
\[
\partial V_{\mu^E}^{ARUM-E} (\overline{u}) = \nabla_u V_{\mu^E}^{ARUM-E}(\overline{u}).
\]
We conclude
\[
p(\overline{u}) = \nabla_u V_{\mu^E}^{ARUM-E} (\overline{u}).
\]
This holds for any $\overline{u} \in U$. The ARUM case follows by identical arguments; it is a bit simpler in fact because $\eps$ is finite $\mu$-a.s. for any $\mu \in \mathcal{M}^{ARUM}$.

We next cover ARUM-CS. The proof follows the same steps and so we present a more concise proof. Each $S(\eta)$ defines feasible alternatives in $\{ 1, \ldots, K \}$. Let $\Delta(S(\eta)) \subseteq \mathbb{R}^K$ denote distributions over these alternatives. For example if $S(\eta) = \{1, 2 \}$ then $\Delta(S(\eta))$ consists of probability vectors of the form $(y_1, y_2, 0, \ldots, 0)$ where $y_1 + y_2 = 1$ and each $y_k \geq 0$. We recognize that $\Delta(S(\eta))$ is a closed, convex set.

Let $\overline{u} \in U$. Define the event in which the utility-maximizing choice is unique,
\[
A^{CS}(\overline{u}) = \left\{ \text{For some } k \in S(\eta), \overline{u}_k + \eps_k > \max_{j \in S(\eta), j \neq k} \overline{u}_j + \eps_j \right\}.
\]
For any $\nu \in \mathcal{M}^{ARUM-CS}$, $A^{CS}(\overline{u})$ occurs $\nu$-a.s. and so we condition on this event for the subsequent statements. Now define
\[
\sup_{y \in \mathbb{R}^K} \sum_{k = 1}^K y_k u_k - f(y,\eps,\eta),
\]
where
\[
f(y,\eps,\eta) = \begin{cases} -\sum_{k = 1}^k y_k \eps_k & \text{ if } y \in \Delta(S(\eta)) \\
    \infty & \text{ otherwise }.
    \end{cases}
\]
Again using Theorem 25.1 in \cite{rockafellar2015convex},
\[
\argmax_{y \in \mathbb{R}^K} \sum_{k = 1}^k y_k \overline{u}_k - f(y,\eps,\eta) = \nabla_u f^*(\overline{u},\eps,\eta).
\]
Since the maximizer is unique $\nu$-a.s. we have
\[
p(\overline{u}) = \E_{\nu}[\nabla_u f^*(\overline{u},\eps,\eta)].
\]
From Proposition 2.2 in \cite{bertsekas1973stochastic} we obtain
\[
p(\overline{u}) = \nabla_u V^{ARUM-CS}_{\nu} (\overline{u}) = \E_{\nu}[\nabla_u f^*(\overline{u},\eps,\eta)].
\]
\end{proof}

\begin{proof} [Proof of Proposition~\ref{prop:welfarepoint}]
Let $\mu \in \mathcal{M}^{ARUM}$ and let
\[
h(t) = V^{ARUM}_{\mu}(t \tilde{u} + (1 - t) u),
\]
which is defined for $t \in [0,1]$. From Lemma~\ref{lem:wdz} we see $h$ is differentiable in $t$ because each $t \tilde{u} + (1 - t) u \in \mathcal{U}$ by assumption. Moreover, $h'(t) = p(t \tilde{u} + (1 - t) u) \cdot (\tilde{u} - u)$. By integrating this derivative we obtain
\begin{align*}
\Delta^{ARUM}(\tilde{u},u,\mu) & = V^{ARUM}_{\mu}(\tilde{u}) - V^{ARUM}_{\mu}(u) \\
& = h(1) - h(0) \\
& = \int_0^1 p(t \tilde{u} + (1 - t) u) \cdot (\tilde{u} - u) dt.
\end{align*}
The same argument holds for the other models, again using Lemma~\ref{lem:wdz}.
\end{proof}

\begin{proof} [Proof of Proposition~\ref{prop:attentionwelfare}]
First I prove (i). Since $\sup_{u \in U} p_k(u) = 1$ we have from Lemma~\ref{lem:idsetequiv} that for any $\nu \in \mathcal{M}^{ARUM-CS}$, $\Pr_{\nu} ( k \in S(\eta)) = 1$. Thus, the following equality holds $\nu$-a.s.
\[
\max_{j \in S(\eta)} \{ u_j + \eps_j \} - \max_{j} \{ \eps_j \} = \max_{j \in \tilde{S}^k(\eta)} \{ u_j + \eps_j \} - \max_{j} \{ \eps_j \}
\]
and so by taking expectations,
\[
V_{\nu}^{ARUM-CS}(u,\tilde{S}^k) = V_{\nu}^{ARUM-CS}(u,S).
\]
Since this holds for any $\nu \in \mathcal{M}^{ARUM-CS}$, this proves (i).

We now prove part (ii), beginning with the bounded $U$ case. The lower bound $0$ follows from Proposition~\ref{prop:id}(iii) using the arguments above, since $1$ is in the identified set for $\Pr_{\nu} (k \in S(\eta))$. I next show the upper bound and then complete the proof by invoking convexity.

From Corollary~\ref{ref:cornontrivial}, there exists a $\nu \in \mathcal{M}^{ARUM-CS}$ that satisfies $\Pr_{\nu} (k \in S(\eta)) = 1 - \gamma < 1$. Write the event
\[
A = \{k \not\in S(\eta) \}.
\]
On the event $A$, the value $\eps_k$ does not matter for choice. Thus for any constant $c$ and $\delta > 0$, there is a $\nu^{c,\delta} \in \mathcal{M}^{ARUM-CS}$ that has
\begin{align*}
\Pr_{\nu^{c,\delta}}  \left(\max_{j \in \tilde{S}^k(\eta)} \{ u_j + \eps_j \} - \max_{j \in S(\eta)} \{ u_j + \eps_j \} \geq c \mid A \right) & \geq 1 - \delta \\
\Pr_{\nu^{c,\delta}} (A) = \Pr_{\nu}(A) & = \gamma.
\end{align*}
Indeed, one can take $\nu^{c,\delta}$ to agree with $\nu$ on the complement of $A$, and then to replace $\eps_k$ with $\eps_k + m$ on the event $A$, where $m$ is sufficiently high. Here, $\nu^{c,\delta}$ and $\nu$ agree on the marginal of the consideration variable $\eta$. Thus, $\nu^{c,\delta}$ translates the marginal of $\eps_k$ up by $m$ on the event $A$, and otherwise agrees with $\nu$.
Write
\begin{align*}
\max_{j \in \tilde{S}^k(\eta)} & \{ u_j + \eps_j \} - \max_{j} \{ \eps_j \} = \\
& \max_{j \in S(\eta)} \{ u_j + \eps_j \} - \max_{j} \{ \eps_j \} + \left(\max_{j \in \tilde{S}^k(\eta)} \{ u_j + \eps_j \} - \max_{j \in S(\eta)} \{ u_j + \eps_j \} \right).
\end{align*}
Recall the left hand size is integrable for any measure in $\mathcal{M}^{ARUM-CS}$. Since the term in parentheses is nonnegative, we have 
\[
V_{\nu^{c,\delta}}^{ARUM-CS}(u,\tilde{S}^k) \geq V_{\nu^{c,\delta}}^{ARUM-CS}(u,S) + c(1-\delta)\gamma
\]
whenever $c(1-\delta)\gamma \geq 0$. Recall $\gamma > 0$. Thus, by setting $c(1-\delta)$ sufficiently high we obtain the left hand side can be made arbitrarily high relative to the right hand side. This yields that
\[
\sup_{\nu \in \mathcal{M}^{ARUM-CS}} \left\{ V^{ARUM-CS}_{\nu} (u, \tilde{S}^k) - V^{ARUM-CS}_{\nu}(u,S) \right\} = \infty.
\]
To complete the proof we leverage convexity of the identified set. Indeed, if $\nu \in \mathcal{M}^{ARUM-CS}$ and $\tilde{\nu} \in \mathcal{M}^{ARUM-CS}$, then any mixture also satisfies $\tau_{\alpha \nu + (1 - \alpha) \tilde{\nu}} \in \mathcal{M}^{ARUM-CS}$, where $\alpha \in [0,1]$.\footnote{ See for example the proof of Lemma~\ref{lem:idsetequiv}.} Here $\alpha$ is the weight on $\nu$ and $(1 - \alpha)$ is the weight on $\tilde{\nu}$. We see
\begin{align*}
V^{ARUM-CS}_{\tau_{\alpha \nu + (1 - \alpha) \tilde{\nu}}} (u, \tilde{S}^k) & = \alpha V^{ARUM-CS}_{\nu} (u, \tilde{S}^k) + (1 - \alpha) V^{ARUM-CS}_{\tilde{\nu}} (u, \tilde{S}^k) \\
V^{ARUM-CS}_{\tau_{\alpha \nu + (1 - \alpha) \tilde{\nu}}} (u, S) & = \alpha V^{ARUM-CS}_{\nu} (u, S) + (1 - \alpha) V^{ARUM-CS}_{\tilde{\nu}} (u, S). 
\end{align*}
By taking differences we see that the identified set in the statement of the proposition is convex. This proves (ii) for the bounded $U$ case.

When $U$ is unbounded and $k$ can be made extremely attractive, we can invoke Proposition~\ref{prop:id}(i) to conclude that for any $\nu \in \mathcal{M}^{ARUM-CS}$, we must have $\Pr_{\nu} (k \in S(\eta)) = \sup_{u \in U} p_k(u) < 1$. Repeating the logic of the previous step, the shock $\eps_k$ is unrestricted on the event $\{ k \not\in S(\eta) \}$, and can be arbitrarily negative or positive. This means the change in average indirect utility can only be bounded by $[0,\infty)$.

Finally, note that in fact (ii) holds under the weaker condition that there is some $\nu \in \mathcal{M}^{ARUM-CS}$ such that $\Pr_{\nu} (k \in S(\eta)) < 1$.
\end{proof}

\subsection{Proofs for Section~\ref{sec:unknownutility}}

\begin{proof}[Proof of Proposition~\ref{prop:covsameu}]
Let $\tilde{p}(x) = p(v(x))$, where $p$ is consistent with ARUM, ARUM-E, or ARUM-CS. The set $v(\mathcal{X})$ is bounded, and thus satisfies bounded utility differences. Then from Proposition~\ref{prop:obseq}(ii), $p$ is consistent with ARUM, ARUM-E, and ARUM-CS. Thus, $\tilde{p}$ is consistent with ARUM, ARUM-E, and ARUM-CS with utility $v$.
\end{proof}

\begin{proof}[Proof of Proposition~\ref{prop:marginalcovariate}]
For each $v \in \mathcal{U}$, the set $v(\mathcal{X})$ is a Cartesian product of closed compact sets because $v$ is continuous and $\mathcal{X}$ is a Cartesian product. In particular, for each $k$ there exists a value $x^* \in \mathcal{X}$ such that for each $\tilde{x} \in \mathcal{X}$ and alternative $j$,
\[
v_k(x^*) - v_j(x^*) \geq v_k(\tilde{x}) - v_j(\tilde{x}).
\]
This makes $x^*$ $k$-maximal and so Proposition~\ref{prop:id}(iii) yields the identifed sets as
\[
\left[\tilde{p}_k(x^*) , 1 \right] =    \left[\sup_{x \in \mathcal{X}} \tilde{p}_k(x) , 1 \right].
\]
\end{proof}

\section{Proof of Supplemental Results}

We extend the earlier notation that one alternative can be made extremely attractive. To that end, let $B \subseteq\{1, \ldots, K\}$ be a subset of the set of alternatives. Say that $B$ can be made simultaneously extremely attractive if alternatives in $B$ can simultaneously be made arbitrarily attractive relative to alternatives in the complement $B^c$, i.e. if
\[
\sup_{u \in U} \min_{k \in B} \max_{j \in B^c} \{u_k - u_j \} = \sup_{u \in U} \left\{\min_{k \in B} u_k - \max_{j \in B^c} u_j \right\} = \infty.
\]

Using such sets, we obtain the following implications.
\begin{prop} \label{appprop:arum}
Assume $p$ is consistent with ARUM-E or equivalently ARUM-CS. Consider the following conditions.
\begin{enumerate}[(i)]
    \item $p$ is consistent with ARUM.
    \item For each $B \subseteq\{1, \ldots, K\}$ and for any sequence $(u^m)_{m = 1}^{\infty}$ of points in $U$ satisfying
    \[
    \lim_{m \rightarrow \infty} \left\{\min_{k \in B} u^m_k - \max_{j \in B^c} u^m_j \right\} \rightarrow \infty
    \]
    it follows that
    \[
    \lim_{m \rightarrow \infty} \sum_{k \in B} p_k(u^m) = 1.
    \]
    \item For each $B \subseteq\{1, \ldots, K\}$ that can be made simultaneously extremely attractive,
    \[
    \sup_{u \in U} \sum_{k \in B} p_k(u) = 1.
    \]
    \item For each $B \subseteq\{1, \ldots, K\}$ that can be made simultaneously extremely attractive,
    \[
    \cup_{\mu \in \mathcal{M}^{ARUM-E}} \Pr_{\mu}(\eps_k > -\infty \text{ for some } k \in B) = 1
    \]
    and 
    \[
    \cup_{\nu \in \mathcal{M}^{ARUM-CS}} \Pr_{\nu}(B \cap S(\eta) \neq \emptyset) = 1.
    \]
\end{enumerate}
We have $(i) \implies (ii) \implies (iii) \implies (iv)$. If in addition $\mathcal{K}^{EA} = \{1, \ldots, K \}$ so that any alternative can be made extremely attractive, we have that conditions $(i)-(iv)$ are equivalent.
\end{prop}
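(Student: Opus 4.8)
The plan is to establish the chain $(i)\Rightarrow(ii)\Rightarrow(iii)\Rightarrow(iv)$ in full generality, and then, under the extra hypothesis $\mathcal{K}^{EA}=\{1,\ldots,K\}$, to close the loop by showing $(iv)\Rightarrow(iii)\Rightarrow(i)$ using Corollary~\ref{ref:cor1} and Proposition~\ref{prop:id}(i). The work is concentrated in the first implication; the others reduce quickly to inclusions of events and the earlier results.

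\textbf{$(i)\Rightarrow(ii)$.} Fix a $\mu\in\mathcal{M}^{ARUM}$ rationalizing $p$, a subset $B$, and a sequence $(u^m)$ with $\min_{k\in B}u^m_k-\max_{j\in B^c}u^m_j\to\infty$. If $B^c=\emptyset$ the conclusion is trivial, since $\sum_{k\in B}p_k\equiv 1$; so assume $B^c\neq\emptyset$. Write $a_m=\min_{k\in B}u^m_k$ and $b_m=\max_{j\in B^c}u^m_j$. The elementary two-sided bounds $\max_{k\in B}\{u^m_k+\eps_k\}\geq a_m+\max_{k\in B}\eps_k$ and $\max_{j\in B^c}\{u^m_j+\eps_j\}\leq b_m+\max_{j\in B^c}\eps_j$ show that the (a.s. unique) utility maximizer lies in $B$ whenever $Z:=\max_{k\in B}\eps_k-\max_{j\in B^c}\eps_j>b_m-a_m$. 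Since $\eps$ is finite $\mu$-a.s., $Z$ is a finite random variable; since $b_m-a_m\to-\infty$ and $c\mapsto\Pr_\mu(Z>c)$ is nonincreasing with limit $\Pr_\mu(Z>-\infty)=1$, we get $\Pr_\mu(Z>b_m-a_m)\to 1$, hence $\sum_{k\in B}p_k(u^m)\to 1$. The main obstacle is precisely this estimate — arranging the correct one-sided control of the maxima over $B$ and $B^c$ and then invoking continuity of the measure — though it is essentially book-keeping.

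\textbf{$(ii)\Rightarrow(iii)$ and $(iii)\Rightarrow(iv)$.} For the first, if $B$ can be made simultaneously extremely attractive, pick a sequence in $U$ achieving $\sup_{u\in U}\{\min_{k\in B}u_k-\max_{j\in B^c}u_j\}=\infty$, apply $(ii)$, and use $\sum_{k\in B}p_k(u)\leq 1$. For the second, fix such a $B$ and any $\mu\in\mathcal{M}^{ARUM-E}$. Because a maximizer with finite value exists $\mu$-a.s., the maximizer lies in $B$ only if $\eps_k>-\infty$ for some $k\in B$; thus $\sum_{k\in B}p_k(u)\leq\Pr_\mu(\eps_k>-\infty\text{ for some }k\in B)$ for each $u\in U$. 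Taking the supremum over $u$ and using $(iii)$ forces this probability to equal $1$ for every rationalizing $\mu$, giving the first assertion of $(iv)$; the ARUM-CS assertion is identical, replacing the event by $\{B\cap S(\eta)\neq\emptyset\}$ and using that a chosen alternative must be considered.

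\textbf{Equivalence when $\mathcal{K}^{EA}=\{1,\ldots,K\}$.} In this case every singleton $B=\{k\}$ can be made (simultaneously) extremely attractive, so condition $(iii)$ restricted to singletons reads $\sup_{u\in U}p_k(u)=1$ for each $k$, which by Corollary~\ref{ref:cor1} is equivalent to $(i)$; together with the chain above this gives $(i)\Leftrightarrow(ii)\Leftrightarrow(iii)$. To fold $(iv)$ in, take $B=\{k\}$: condition $(iv)$ says every $\mu\in\mathcal{M}^{ARUM-E}$ satisfies $\Pr_\mu(\eps_k>-\infty)=1$, so $1\in\Theta^{k,E}$; but $k\in\mathcal{K}^{EA}$, so Proposition~\ref{prop:id}(i) gives $\Theta^{k,E}=\{\sup_{u\in U}p_k(u)\}$, forcing $\sup_{u\in U}p_k(u)=1$ for each $k$, i.e. $(iii)$. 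Hence $(iv)\Rightarrow(iii)$ and all four conditions are equivalent.
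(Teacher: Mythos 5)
Your proof is correct and follows essentially the same route as the paper: the forward chain $(i)\Rightarrow(ii)\Rightarrow(iii)\Rightarrow(iv)$ is the paper's argument, with your random variable $Z=\max_{k\in B}\eps_k-\max_{j\in B^c}\eps_j$ packaging the same two-sided bound and continuity-from-below step that the paper carries out via the nested events $B^m$ (and your separate treatment of $B^c=\emptyset$ is a harmless extra precaution). The only divergence is in closing the cycle: the paper proves $(iv)\Rightarrow(i)$ directly, observing that $(iv)$ applied to each singleton forces every rationalizing $\mu$ to place mass one on $\{\eps_k>-\infty\}$ for all $k$ and hence to lie in $\mathcal{M}^{ARUM}$, whereas you route through $(iv)\Rightarrow(iii)\Rightarrow(i)$ via Proposition~\ref{prop:id}(i) and Corollary~\ref{ref:cor1}; both are valid and rest on the same underlying facts.
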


\begin{proof}
(Proof that $(i) \implies (ii)$.) Assume $p$ is consistent with ARUM. Let $B \subseteq\{1, \ldots, K\}$ and let $(u^m)_{m = 1}^{\infty}$ be a sequence as in part (ii).

For any distribution $\mu$ that rationalizes $p$ according to ARUM, we have
\begin{align*}
\liminf_{m \rightarrow \infty} \sum_{k \in B} p_k(u^m) & =  \liminf_{m \rightarrow \infty} \Pr_{\mu} \left(\max_{k \in B} u^m_k + \eps_k > \max_{j \in B^c} u^m_j + \eps_j \right) \\
& \geq \liminf_{m \rightarrow \infty} \Pr_{\mu} \left(\max_{k \in B} \left\{ \min_{\ell \in B} \{ u^m_{\ell} \}+ \eps_k \right\} > \max_{j \in B^c} u^m_j + \eps_j \right) \\
& = \liminf_{m \rightarrow \infty} \Pr_{\mu} \left(\max_{k \in B} \left\{ \eps_k \right\} > \max_{j \in B^c} \left\{ u^m_j - \min_{\ell \in B} \{ u^m_{\ell} \} + \eps_j \right\} \right) \\
& \geq \liminf_{m \rightarrow \infty} \Pr_{\mu} \left(\max_{k \in B} \left\{ \eps_k \right\} > \max_{j \in B^c} \left\{ \sup_{s \geq m} \left\{ u^{s}_j - \min_{\ell \in B} \{ u^{s}_{\ell} \} \right\} + \eps_j \right\} \right) \\
& = \Pr_{\mu} \left(\max_{k \in B} \left\{ \eps_k \right\} > -\infty \right) \\ 
& = 1.
\end{align*}
The first line (equality) uses the fact that the events in which $k$ and $j$ are chosen are mutually exclusive. The second line (inequality) replaces $u^m_k$ with a lower value. The third line (equality) moves the finite $\min_{\ell \in B}\{ u^m_{\ell} \}$ to the right hand side. The fourth line (inequality) replaces certain $m$ terms with a supremum over $s \geq m$, which lowers the probability. The fifth line (equality) uses the fact that the events
\[
B^m = \left\{ \max_{k \in B} \{ \eps_k \} > \max_{j \in B^c} \left\{ \sup_{s \geq m} \left\{ u^{s}_j - \min_{\ell \in B} \{ u^{s}_{\ell} \} \right\} + \eps_j\} \right\} \right\}
\]
satisfy $B^m \subseteq B^q$ for $q \geq m$. Since $(u^m)_{m = 1}^{\infty}$ is a sequence as in part (ii) we obtain 
\[
\lim_{m \rightarrow \infty} \left\{ u^{m}_j - \min_{\ell \in B} \{ u^{m}_{\ell} \} \right\} = -\infty.
\]
and hence $\cup_{m = 1}^{\infty} B^m = \{ \max_{k \in B} \{ \eps_k \} > -\infty \}$. The fifth line then follows from the continuity below property of probabilities (e.g. (\cite{billingsley2017probability}, Theorem 2.1(i))). The final equality uses the fact that each $\eps_k$ is finite $\mu$-almost surely (recall $\mu \in \mathcal{M}^{ARUM}$). This proves that (i) implies (ii).

(Proof that $(ii) \implies (iii)$.) Let $(u^m)_{m = 1}^{\infty}$ be a sequence as in (ii). Since $u^m \in U$ for each $m$,
\[
1 = \lim_{m \rightarrow \infty} \sum_{k \in B} p_k (u^m) \leq \sup_{u \in U} \sum_{k \in B} p_k(u) \leq 1.
\]
Thus the inequalities are equalities.

(Proof that $(iii) \implies (iv)$.) Let $B \subseteq \{1, \ldots, K\}$ be a set that can be made simultaneously extremely attractive. Because $p$ is consistent with ARUM-E, there exists some distribution $\mu \in \mathcal{M}^{ARUM-E}$. We have
\begin{align*}
1 = \sup_{u \in U} \sum_{k \in B} p_k(u)
& = \sup_{u \in U} \Pr\left( \max_{k \in B} u_k + \eps_k > \max_{j \in B^c} u^m_j + \eps_j \right) \\
& \leq P_{\mu} \left( \eps_k > -\infty \text{ for some } k \in B \right).
\end{align*}
Similarly, if $\nu \in \mathcal{M}^{ARUM-CS}$, then
\begin{align*}
1 \leq \sup_{u \in U} \sum_{k \in B} p_k(u)
& \leq P_{\nu} \left( B \cap S(\eta) \neq \emptyset \right) = 1.
\end{align*}
This proves that (iii) implies (iv).

Finally, note that when $\mathcal{K}^{EA} = \{1, \ldots, K \}$ and (iv) holds, we can set $B = \{k\}$, which yields that $Pr_{\mu} (\eps_k > -\infty) = 1$ for any $\mu \in \mathcal{M}^{ARUM-E}$. Since $k$ is arbitrary, we obtain that for any $\mu \in \mathcal{M}^{ARUM-E}$, we also have $\mu \in \mathcal{M}^{ARUM}$ and hence $p$ is consistent with ARUM. Thus, (iv) implies (i) and so conditions (i)-(iv) are equivalent.

\end{proof}

\end{appendices}

\end{document}